\newtheorem*{proposition*}{Proposition}
\newtheorem*{corollary*}{Corollary}
\newtheorem{proposition}{Proposition}
\newtheorem{corollary}{Corollary}
\newtheorem{lemma}{Lemma}
\newtheorem*{remark*}{Remark}
\newtheorem{definition}{Definition}
 \newtheorem{theorem}{Theorem}
 \newtheorem{example}{Example}
\newtheorem{claim}{Claim}
\renewcommand{\email}[2][]{%
  \ifx\emails\@empty\relax\else{\g@addto@macro\emails{,\space}}\fi%
  \@ifnotempty{#1}{\g@addto@macro\emails{\textrm{(#1)}\space}}%
  \g@addto@macro\emails{#2}%
}
\begin{document}
\onehalfspacing

\thispagestyle{empty}

\title[Immunity to Strategic Admissions]{Comparing School Choice and College Admission Mechanisms By Their Immunity to Strategic Admissions}

\author{Somouaoga Bonkoungou}
\address{Higher School of Economics, St.Petersburg}
\email{bkgsom@gmail.com}

\author{Alexander Nesterov}
\email{nesterovu@gmail.com}

\thanks{We thank Parag Pathak and Tayfun S\"onmez for their comments and continuous communication. We also thank Anna Bogomolnaia, Herve Moulin, Lars Ehlers, Sean Horan, Onur Kesten, Rustamdjan Hakimov, Szilvia Papai, Yan Chen, Fuhito Kojima, Alvin Roth, Fedor Sandomirskiy and Mikhail Panov for their feedback. Support from the Basic Research Program of the National Research University Higher School of Economics is gratefully acknowledged. The research is partially financed by the RFBR grant 20-01-00687.}

\date{January 16, 2019}

\begin{abstract}

Recently dozens of school districts and college admissions systems around the world have reformed their admission rules. As a main motivation for these reforms the policymakers cited strategic flaws of the rules: students had strong incentives to game the system, which caused dramatic consequences for non-strategic students. However, almost none of the new rules were strategy-proof. We explain this puzzle. We show that after the reforms the rules became more immune to strategic admissions: each student received a smaller set of schools that he can get in using a strategy, weakening incentives to manipulate. Simultaneously, the admission to each school became strategy-proof to a larger set of students, making the schools more available for non-strategic students. We also show that the existing explanation of the puzzle due to \cite{pathak2013} is incomplete.\\ 

%
\noindent \emph{Keywords}: matching market design, school choice, college admission, manipulability\\
%
%
\noindent \textbf{JEL Classification}: C78, D47, D78, D82
\end{abstract}

\maketitle

\thispagestyle{empty}

\section{Introduction}
In recent years, dozens of school districts around the world have reformed their school admissions systems. Examples include education policy reforms for K-9 Boston Public Schools (BPS) in 2005, Chicago Selective High Schools (SHS) in 2009 and 2010, Denver Public Schools in 2012, Seattle Public Schools in 1999, Ghanaian Secondary Public Schools in 2007 and Primary Public Schools in more than 50 cities and provinces in England and Wales in 2005-2011. Like school admissions, many college admissions systems have also been reformed; well-known examples include college admissions in China and Taiwan. 

Sometimes the reforms were a pressing issue. The Chicago SHS, for example, called for a reform in a midstream of their admissions process. What were the policymakers concerned about, and what was it at stake for such a sudden midstream change? There are signs indicating that the public and the policymakers were concerned about the high vulnerability of the admissions mechanisms to strategic manipulations. For example, the former superintendent of the Boston BPS said that their mechanism \textit{\enquote{should be replaced with an alternative [...] that removes the incentives to game the system}} \citep{pathak2008}.

Indeed, this high vulnerability made strategy an essential decision for students and led to serious mismatches between schools and students. Strategic rankings were playing an unbalanced role in admissions versus priorities/grades, and that was perceived undesirable. For example, the education secretary in England remarked that their admissions system were forcing \textit{“many parents to play an admissions game with their children’s future}” \citep{pathak2013}. Prior to the reform in China, \textit{“a good score in the college entrance exam is worth less than a good strategy in the ranking of colleges''}  \citep{chenandkesten2017a,nie2007game}. 

These issues also compromise the perceived fairness of the system as the consequences to truthful students can be disastrous. The Chicago SHS called for reform after they observed that \textit{``High-scoring kids were being rejected simply because of the order in which they listed their college prep preferences''} \citep{pathak2013}. As one parent in China reports: \textit{``My child has been among the best students in his school and school district. [...] Unfortunately, he was not accepted by his first choice. After his first choice rejected him, his second and third choices were already full. My child had no choice but to repeat his senior year.''} \citep{chenandkesten2017a,nie2007game}. Reportedly, similar concerns resulted in protests in Taiwan \citep{dur2018}.

Did the reforms make the admissions mechanisms fully immune to manipulation? The answer is no. Except for Boston, each reform replaced one vulnerable mechanism with another vulnerable mechanism. This is a puzzle, given what motivated these reforms. But could it be that the new mechanisms are more immune to manipulations than the old ones? Did the reforms weakened the incentives to manipulate and made the consequences thereof less harmful to others? To address this question we develop a criterion to rank mechanisms by their level of what we call immunity to strategic admissions.  

To explain this criterion, let us begin with the most immune mechanisms. A mechanism is \textit{strategy-proof} when no student can ever gain from manipulating his preferences. That is, for such a mechanism the admission of each student $i$ to each school $ s$ is strategy-proof. Let us generalize this definition to any, possibly not strategy-proof, mechanism. We say that the \textbf{admission to school $ s$ is strategy-proof to student $ i$ via mechanism A} if none of $i$'s profitable manipulations gives him an admission at school $ s$.  In other words, all the misreports that result in student $ i$'s admission to school $ s$ via mechanism A are not profitable: in each of those instances, he is weakly better-off reporting his preferences truthfully. This student has no reason to misreport his preferences to mechanism $ A$ when aiming at school $ s$. Student $i$ may still profitably manipulate mechanism $A$ and get an admission at other schools --- but not at $ s$. 

We measure the level of immunity of a mechanism by how strategy-proof admission to each school is. Formally, mechanism $A$ is \textbf{more immune (to strategic admissions) than $B$} if for each student $i$ the set of schools whose admission is strategy-proof to $i$ via $B$ is a subset of the set of schools whose admission is strategy-proof to $i$ via $A$, while the converse is not true. Thus, with a more immune mechanism, each student faces more schools that he cannot be admitted to via a profitable strategy.

We find that each reform made the mechanisms more immune. For each student we compare by inclusion the two sets --- before and after the reforms --- of schools whose admission is strategy-proof. Each of the reforms enlarged this set. Simultaneously, following the reforms, the admission to each school became strategy-proof to a larger set of students.

Roughly, the reforms made the mechanism more immune by using one or both of the following features: they allowed students to submit longer lists of acceptable schools and made admission to every school less sensitive to its rank in the list. Intuitively, a longer list allows students to be less strategic about selecting which schools to include in the list, while lower sensitivity about ranking facilitates truthful ranking of the selected schools. 

We illustrate our concept and the result in the following example.

\begin{table}
\begin{tabular}{lllll}
\# & School Name  & US rank & min score & max score \\ \hline
1  & Payton & 9 & 898 & 900 \\
2  & Northside& 23& 894 & 900 \\
3  & Lane   & 69& 873 & 900 \\
4  & Young  & 71& 883 & 900 \\
5  & Jones  & 91& 891 & 900 \\ \hline
6  & Brooks & 186 & 799 & 890 \\
7  & Lindblom & 272 & 772 & 858 \\
8  & Westinghouse & 574 & 773 & 884 \\
9  & King   & 1,133   & 678 & 844 \\
10 & South Shore  & 6,066   & 684 & 820 \\ \hline
\end{tabular}
\caption{Chicago selective high schools (SHS): rankings and cutoff grades.}\label{tab-chicago}
\justifying{\footnotesize\linespread{1.0}{{\textit{Notes}: South Shore was added to SHS in 2011, followed by Hancock school in 2015. The US rank is according to US News \& World 2019 Report; the grades are reported for general admission in 2019, the maximal feasible grade is 900; the data source is www.go.cps.edu.}}}
\end{table}

\subsection*{Illustrative Example} Let us consider the two reforms of the Chicago SHS in 2009 and 2010. Each school uses a common priority based on students' composite scores. The admission to each of these schools is very competitive. To give you an idea, only 4 000 from more than 10 000 participants were admitted in the 2018 admission session. In 2009, the Chicago SHS replaced the Boston mechanism where students can rank only 4 schools ($\beta^4$) with a serial dictatorship with the same ranking constraint ($SD^4$).\footnote{The serial dictatorship is a mechanism where students follow the common priority order and choose their most preferred schools among those that remain. The definitions of the mechanisms are given in the next section. Constrained matching mechanisms were first studied in \cite{haeringer2009constrained} and \cite{calsamiglia2010constrained}.} In 2010 the mechanism stayed the same but the constrained was increased to 6 ($SD^6$). 

Among the 10 schools, 5 are elite, being the top 5 schools in the state of Illinois and among top 100 in the US (see Table \ref{tab-chicago}). These schools are preferred by most, if not all, students over each other school. For simplicity, let us suppose that students have tier preferences: each student prefers each top 5 school over each non top 5 school, but may differ on ranking schools in each tier. Let each school have 400 seats. 

Under the mechanism $ \beta^4$, each of the 400-highest priority students is guaranteed a seat in his most preferred school, while each other student may potentially get each school by a profitable manipulation. The admission to every school is thus strategy-proof only for the 400-highest priority students via $ \beta^4$.

However, under the mechanism $SD^4$, each of the 1600-highest priority students is guaranteed one of his 4 most preferred schools (Lemma \ref{lemma1}). The admission to every school is strategy-proof for each of them, while each other student can potentially get every school by a profitable manipulation. Under $SD^6$, this is the 2400-highest priority students; while only the 2000-highest priority students can be admitted to the top 5 schools. Following the reforms, the immunity of the mechanism increased significantly as the share of students for whom the admission to the top 5 schools is strategy-proof increased from 4\% to 24\% in 2009 and further to 100\% in 2010. End of the example.

Our results rationalize all the reforms in the light of each of the cited concerns. The reforms decreased each student's incentives to manipulate as measured by the range of schools he or she could get by manipulation. Intuitively, this range decreases due to the more competitive schools as it was in the Chicago example, which makes the incentives to manipulate even weaker. At the same time, the remaining manipulations harmed the truthful students less as the admission to each school became more strategy-proof. Again, in the Chicago example all high-scoring kids were safe to submit their preferences truthfully.

But the reforms could also be rationalized by that the instances without manipulations became more frequent. The state-of-the-art notion is \textit{manipulability} due to \cite{pathak2013}:  mechanism $A$ is less manipulable than mechanism $B$ if A is manipulable by at least one student in a subset of preference profiles where $B$ is manipulable. 

We found that this notion has limited applications for important reforms. First, contrary to what was claimed, it only partially explains a major reform in England and Wales that was followed by more than 50 local cities (see section 4 and Table \ref{reforms}). 

Second, under realistic assumption on students preferences, the notion is not satisfactory for rationalizing the 2009 reform of the Chicago SHS. Indeed, for each preference profile, it suffices that one student has a profitable manipulation to declare the profile as manipulable. However, with tier preference structure in the Chicago SHS --- strong competition for the elite schools --- at least one such student always exists. By truthfully ranking the top 5 schools the mechanisms that allow students to submit at most 5 schools will leave the seats at the other schools unassigned. Therefore, many students will be unmatched and at least one student will have a profitable manipulation. Before and after the 2009 reform, both mechanisms are manipulable at any tier preference profile. More generally, we show that constrained $SD$ is not manipulable if and only if the constraint is not binding (Proposition \ref{pro3}). However, immunity to strategic admissions changed significantly after both the 2009 and 2010 reform, and the fact that admission to the elite schools is now strategy-proof may rationalize why $SD^6$ has been used in Chicago ever since. 

Very recently \cite{decerf2018manipulability} used another frequency notion stemming from \cite{ar2016}: they compare mechanisms by inclusions of vulnerable individual preference relations, i.e., the preference relations for which truth is not always a dominant strategy. This notion explains reforms where the constrained list of the Boston mechanism were replaced by a constrained list of Gale-Shapley mechanism and where the list in Gale-Shapley were extended. The major difference with our notion is that we quantitatively measure what students can get or cannot get by manipulation. We particularly measure how schools are protected from strategic admissions.

The paper is organized as follows. Section 2 presents the model and the main definitions. Section 3 presents the main results for immunity to strategic admissions and section 4 compares them to the results for manipulability. Section 5 develops an equilibrium refinement of immunity to strategic admissions. Section \ref{conclusions} concludes.
\vspace{0.7cm}
\section{Model}
\vspace{0.7cm}
The school choice model originated in \cite{balinski1999} and \cite{abdul2003}; constrained school choice was first studied in \cite{haeringer2009constrained}.

There is a finite set $ I$ of students with a generic element $ i$ and a finite set $ S$ of schools with a generic element $ s$. Each student $ i$ has a strict preference relation $ P_{i}$ over $ S\cup \{\emptyset\}$ (where $ \emptyset$ stands for being unmatched).\footnote{A strict preference relation is a complete, transitive and asymmetric binary relation.} Each school $ s$ has a strict priority order $ \succ_{s}$ over the set $ I$ of students and a capacity $ q_{s}$ (a natural number indicating the number of seats available at school $ s$). For each student $ i$, let $ R_{i}$ denote the ``at least as good as'' relation associated with $ P_{i}$.\footnote{That is, for each $s,s'\in S\cup\{\emptyset\} $, $ s\mathrel{R_{i}}s'$ if and only $ s\mathrel{P_{i}}s'$ or $ s=s'$.} The list $ P=(P_{i})_{i\in I}$ is a preference profile, $ \succ=(\succ_{s})_{s\in S}$ is a priority profile and $ q=(q_{s})_{s\in S}$ is a capacity vector. We often write a preference profile $ P=(P_{i},P_{-i})$ to emphasize the preference relation of student $ i$.\footnote{More generally, we write a preference profile $ P=(P_{I'},P_{-I'})$ to emphasize the components of a subset $I' $ of students.} The tuple $ (I,S,P,\succ,q)$ is a school choice problem. We assume that there are more students than schools and at least two schools to reflect real-life school choice context.\footnote{That is, $ \lvert I\lvert > \lvert S\lvert \geq 2$.} We fix the set of students and the set of schools throughout the paper. For short, we call the pair $(\succ,q)$ a school choice \textbf{environment} and the triple $(P,\succ,q) $ a school choice \textbf{problem}.\footnote{We assume that schools are not strategic. In practice, the priorities are determined by law or by students' performances, and known to students before they submit their preferences.} School $ s$ is \textbf{acceptable} to student $ i$ if $ s \mathrel{P_{i}} \emptyset$. Otherwise, it is unacceptable. We will often specify a preference relation and a priority order as follows \begin{center}
\begin{tabular}{c| c}
  $ P_{i}$ & $ \succ_{s}$\\ \hline
  $s$ & $ i$\\
  $s'$ & $ j$\\
  $ \emptyset$ & $ k$
\end{tabular}
\end{center}
to indicate that student $ i$ prefers school $ s$ to school $ s'$ and finds no other school acceptable; and, when there are three students, that student $ i$ has the highest priority at school $ s$, student $ j$ next and student $ k$ last. 

A matching $ \mu$ is a function $ \mu:I \rightarrow S\cup \{\emptyset\}$ such that no school is assigned more students than its capacity.\footnote{That is for each school $ s$, $\lvert \mu^{-1}(s)\lvert \leq q_{s}$.} Let $ (P,\succ,q)$ be a problem. A matching $ \mu$ is \textbf{individually rational} under $ P$ if for each student $ i$, $ \mu(i) \mathrel{R_{i}} \emptyset$. Student $ i$ has \textbf{justified envy} over student $ j$ in the matching $ \mu$ if he prefers the school assigned to student $ j$ to his assignment and has higher priority than $ j$ at that school.\footnote{That is, for some school $ s$, $ \mu(j)=s$, $ s\mathrel{P_{i}} \mu(i)$ and $ i\mathrel{\succ_{s}}j$.} A matching $ \mu$ is \textbf{non-wasteful} if no student prefers a school that has an empty seat.\footnote{That is, there is no student $ i$ and a school $ s$ such that $ \lvert \mu^{-1}(s)\lvert < q_{s}$ and $ s \mathrel{P_{i}} \mu(i)$.} A matching is \textbf{stable} if 
\begin{itemize}
\item it is individually rational,
\item no student has a justified envy over another, and
\item it is non-wasteful.
\end{itemize}
A \textbf{mechanism} $ \varphi$ is a function that  maps school choice problems $ (P,\succ,q)$ to matchings. Let $ \varphi_{i}(P,\succ,q)$ denote the outcome for student $ i$. 
We present real-life mechanisms next.

\subsection{Mechanisms} Most real-life mechanisms can be described using the Gale-Shapley deferred acceptance mechanism.
\subsubsection*{Gale-Shapley}
\cite{galeandshapley} showed that for each problem, there is a stable matching. In addition, there is a student-optimal stable matching, that each student finds at least as good as any other stable matching.\footnote{\cite{galeandshapley} described an algorithm for finding this matching.} For each problem $ (P,\succ,q)$, let $ GS(P,\succ,q)$ denote the student-optimal stable matching.
\subsubsection*{Serial Dictatorship} In environments where schools have the same priority order, we abuse language and call the Gale-Shapley mechanism, serial dictatorship.\footnote{According to our definition, a mechanism has as a domain the set of all problems --- including problems where schools have different priorities.} Let $ SD(P,\succ,q)$ denote the matching assigned by this mechanism to the problem $ (P,\succ,q)$ where schools have the same priority under $ \succ$.
\subsubsection*{First-Preference-First} The set of schools are partitioned into \textbf{equal preference schools} and \textbf{first-preference-first} schools. For each problem $ (P,\succ,q)$, the mechanism assigns the matching $ GS(P,\succ',q)$ where the priority profile $ \succ'$ is obtained as follows:
\par 1.~for each equal preference school $ s$, $ \succ'_{s}=\,\succ_{s}$ and
\par 2.~for each first-preference-first school $ s$, $ \succ'_{s}$ is an adjustment of $ \succ_{s}$ with respect to $ P$:
\begin{itemize}
\item students who rank school $ s$ at a given position have higher priority under $ \succ'_{s}$ than students who rank it lower than this position, and
\item students who ranked school $ s$ at the same position are ordered according to $ \succ_{s}$.
\end{itemize}
Let $ FPF(P,\succ,q)$ denote the matching assigned by this mechanism to $ (P,\succ,q)$. Briefly, the original priority of each equal preference school remains unchanged, while the original priority of each first-preference-first school is adjusted by favoring higher ranks.  
\subsubsection*{Boston} The Boston mechanism is a first-preference-first mechanism where every school is a first-preference-first school. Let $ \beta(P,\succ,q)$ denote the matching assigned to $ (P,\succ,q)$.
\subsubsection*{Constrained versions} In practice, students are allowed to report a limited number of schools. This means that schools that are listed below a certain position are not considered. For each student $ i$, each preference relation $ P_{i}$ and each natural number $ k\leq \lvert S\lvert$, let $ P^{k}_{i}$ denote the truncation of $ P_{i}$ after the $ k$'th acceptable choice (if any). That is, every school ranked below the $ k$'th position under $ P_{i}$ is unacceptable under $ P^{k}_{i}$; otherwise, the ranking is as in $ P_{i}$. Let $ P^{k}=(P^{k}_{i})_{i\in I}$. The $ k$-constrained version $ \varphi^{k}$ of the mechanism $ \varphi$ is the mechanism that assigns to each problem $ (P,\succ,q)$ the matching $ \varphi(P^{k},\succ,q)$.
\subsubsection*{Chinese parallel}
This mechanism is determined by a parameter $ e\geq 1$ (a natural number).
For each problem $(P,\succ,q) $, the outcome is a sequential application of constrained $GS$. In the first round, students are matched according to $GS^{e}$. The matching is final for matched students, while unmatched students proceed to the next round. In the next round, each school reduces its capacity by the number of students assigned to it the last round and the unmatched students are matched according to $GS^{2e}$, and so on. Let $ Ch^{(e)}(P,\succ,q)$ denote the matching assigned by the mechanism to $ (P,\succ,q)$.\footnote{This definition of the Chinese parallel mechanisms is given only for the symmetric version where each round has the same length $e$. Our results hold also for the asymmetric mechanisms with different lengths of the rounds. See \cite{chenandkesten2017a} for details.}
\vspace{0.7cm}
\section{Results} \vspace{0.7cm}

We first introduce our immunity notion. We would like to test how immune is a mechanism to strategic admissions, focusing on individual students. We build a definition from the following question. When can we say that an admission to school $ s$ by student $i $ is not due to strategic manipulation? Obviously, when no profitable manipulation by $ i$ gives him an admission at $ s$, then none of this admissions is due to strategy. 
\vspace{0.5cm}
\begin{definition}\label{def-str-pr-adm} Let $ \varphi$ be a mechanism and $ (\succ,q)$ an environment. 
\begin{itemize} 
    \item The \textbf{admission to school $ s$ is strategy-proof to student $ i$ via mechanism $ \varphi$} if there is no preference profile $ P$ and a profitable manipulation $ P'_{i}$ that gives him $s$: 
\begin{equation*}
s=\varphi_{i}(P'_{i},P_{-i},\succ,q) \mathrel{P_{i}}\varphi_{i}(P,\succ,q).
\end{equation*} 
\item  
 Mechanism $ \varphi$ is \textbf{strategy-proof to student $ i$} if the admission to every school is strategy-proof to student $i$ via $ \varphi$.
\end{itemize}
\end{definition}
\vspace{0.5cm}

 We qualify these admissions as non-strategic and the rest as strategic. A closer definition in the literature is strategy-proofness saying that mechanism $ \varphi$ is strategy-proof if for each environment $ (\succ,q)$ and each student $ i$, there is no preference profile $ P$ and $ P'_{i}$ such that 
\begin{equation*}
    \varphi_{i}(P'_{i},P_{-i},\succ,q) \mathrel{P_{i}} \varphi_{i}(P,\succ,q).
\end{equation*}
If a mechanism $ \varphi$ is strategy-proof, then for each environment, the admission to every school is strategy-proof to every student via $ \varphi$. The unconstrained $ GS$, for example, is strategy-proof \citep{roth1982a,dubins}. Next, for each student, we find the set of schools for which the admission is strategy-proof to him and rank mechanisms as follows.

\begin{definition}\label{def-more-immune}
Mechanism $ \varphi$ is \textbf{more immune to strategic admissions} than mechanism $ \psi$ if 
\begin{itemize}
\item for each environment $ (\succ,q)$, if the admission to school $ s$ is strategy-proof to student $ i$ via $ \psi$, then the admission to school $ s$ is also strategy-proof to $i$ via $ \varphi$ and, \vspace{0.2cm}
\item there is an environment $(\succ,q)$ where the admission to some school $ s$ is strategy-proof to some student $ i$ via $ \varphi$, but not via $ \psi$.
\end{itemize}
\end{definition}

Observe that the immunity relation is transitive as it is based on set inclusion.

In the following section, we apply this definition to the mechanisms before and after the reforms. Subsequently, we discuss its explanatory power compared to the state-of-the-art notion.

\subsection{Reforms and immunity to strategic admissions}
\subsubsection*{In England and Wales}
According to the field observation (see Table \ref{reforms}) more than 50 areas in England and Wales have replaced different constrained versions of $FPF$ with constrained $ GS$. In the following theorem, we show that they have replaced mechanisms with more immune alternatives.

\begin{theorem} Suppose that there are at least $ k$ schools where $ k>1$ and at least one first-preference-first school. Then $GS^{k} $ is more immune to strategic admissions than $ FPF^{k}$. \label{pro2}
\end{theorem}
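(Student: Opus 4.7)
The plan is to verify the two clauses of Definition~\ref{def-more-immune} separately. For the inclusion clause I argue the contrapositive: given a profile $(P'_i, P_{-i})$ and truth $P_i$ with $GS^k_i(P'_i, P_{-i}) = s \mathrel{P_i} GS^k_i(P_i, P_{-i})$, I exhibit a profitable $FPF^k$-manipulation by $i$ to $s$. First, by strategy-proofness of unconstrained Gale--Shapley (applied by regarding $(s)$ as $i$'s true preference while others' reports are fixed at $P^k_{-i}$), one has $GS^k_i((s), P_{-i}) = s$ as well, so I may normalize the misreport to $P''_i = (s)$.

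The core step is to transfer this manipulation to $FPF^k$. Suppose first that $s$ is a first-preference-first school, and let $F_s$ denote the set of students who rank $s$ at position one under the submitted profile $((s), P^k_{-i})$. Every such student applies to $s$ in round one of $GS^k$, so $F_s$ is contained in the set $A_s$ of eventual applicants to $s$ during the $GS^k$ execution on that profile. Since $i \in F_s \subseteq A_s$ and $i$ lies in the top $q_s$ of $A_s$ under $\succ_s$ (because $i$ is matched to $s$ in $GS^k$), $i$ also lies in the top $q_s$ of $F_s$ under $\succ_s$. Under $\succ'_s$ the group $F_s$ sits above every other student with $\succ_s$ breaking ties internally, so $i$ remains in the top $q_s$ under $\succ'_s$ among the $FPF^k$ applicants and is never bumped, giving $FPF^k_i((s), P_{-i}) = s$. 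When $s$ is not first-preference-first, $\succ'_s = \succ_s$ and the same equality may fail on the original $P_{-i}$ due to cascade effects at the other FPF schools; I then replace $P_{-i}$ by one constructed so that every student above $i$ under $\succ_s$ secures some more-preferred school under $\succ'$ and thus does not reach $s$ in the $FPF^k$ execution. Profitability is secured by a truth $\hat{P}_i$ that ranks $s$ just beyond position $k$, with the top $k$ entries being schools at which $P_{-i}$'s competition rejects $i$, so that $FPF^k_i(\hat{P}_i, P_{-i}) = \emptyset$, strictly $\hat{P}_i$-worse than $s$.

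For the strictness clause, I construct a small environment with $s$ the unique first-preference-first school, $i$ at position two under $\succ_s$, and a profile where the top-$\succ_s$ student always obtains another school and never ranks $s$ at position one. In $GS^k$ this top student beats $i$ at $s$ whenever both apply, so admission to $s$ is strategy-proof to $i$ via $GS^k$; in $FPF^k$, placing $s$ first in $i$'s misreport promotes $i$ above the top-$\succ_s$ student in $\succ'_s$, yielding $s$, while a truth $\hat{P}_i$ with $s$ beyond the top $k$ makes this manipulation profitable. The main obstacle is the non-FPF case of the transfer step, where cascade effects from $\succ'$-adjustments at other FPF schools can introduce new applicants above $i$ in $\succ_s$ at $s$; controlling them requires a careful redesign of $P_{-i}$, leveraging the assumption that the environment contains at least one first-preference-first school.
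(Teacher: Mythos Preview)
Your outline correctly identifies the two clauses to be checked and the contrapositive strategy for the inclusion clause, and your argument that $FPF^{k}_{i}((s),P_{-i})=s$ when $s$ is a first-preference-first school is sound: once $i$ ranks $s$ first he sits in the top tier of $\succ'_{s}$, and being among the $q_{s}$ best applicants to $s$ under $\succ_{s}$ in the $GS^{k}$ run forces the same under $\succ'_{s}$.

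The gap is in the profitability step, where you try to keep the \emph{same} opponent profile $P_{-i}$ and simply choose a ``truth'' $\hat P_{i}$ whose top $k$ entries are schools at which $P_{-i}$'s competition rejects $i$. This is not justified. Under $FPF^{k}$ the adjusted priorities at every first-preference-first school depend on $i$'s own ranking, so the very act of placing a school in $\hat P_{i}$'s top $k$ can promote $i$ there. Concretely, take $k=2$, schools $s_{1},s_{2},s_{3}$ with one seat each, $s_{1}$ first-preference-first, $\succ_{s_{1}}: j,i,\ldots$, $\succ_{s_{2}}: m,j,i$, $P_{j}=(s_{2},s_{1})$, $P_{m}=(s_{2})$, and $P_{i}=(s_{1},s_{2},s_{3})$. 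Then $GS^{2}_{i}(P)=\emptyset$ and $GS^{2}_{i}((s_{3}),P_{-i})=s_{3}$, but $FPF^{2}_{i}(P_{i},P_{-i})=s_{1}$: because $j$ ranks $s_{1}$ only second, $i$ jumps above $j$ in $\succ'_{s_{1}}$ and keeps the seat. So your proposed $\hat P_{i}=P_{i}$ fails. You suggest picking a different top $k$, but you give no argument that $k$ such ``losing'' schools exist for $i$ against $P_{-i}$ under $FPF^{k}$, and with the feedback between $\hat P_{i}$ and $\succ'$ this is a genuine fixed-point issue, not a detail. The non-FPF case is worse: you openly defer the construction of a replacement $P_{-i}$ without saying what it is or why it works.

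The paper avoids both problems with a single move that you are missing: it does \emph{not} attempt to reuse $P_{-i}$. Instead, from $\mu=GS^{k}(P,\succ,q)$ it builds a fresh profile $P^{\ast}$ in which every student $j\neq i$ reports only $\mu(j)$, and $i$ reports his original top $k$ followed by $s$. Because each $j\neq i$ ranks a single school first, the first-preference-first adjustment is irrelevant for $P^{\ast}_{-i}$; one checks directly that $FPF^{k}(P^{\ast})=\mu$ (so $i$ is unmatched) and that $FPF^{k}_{i}((s),P^{\ast}_{-i})=s$, using only that $\mu^{-1}(s)$ either has a vacancy or contains someone below $i$ in $\succ_{s}$ (the latter follows from $GS_{i}((s),P^{k}_{-i})=s$). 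This single construction handles the FPF and non-FPF cases uniformly and makes the profitability step immediate. Your sketch for the strictness clause is along the right lines but would also benefit from being made fully explicit.
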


See the appendix for the proof. 
Kingston upon Thames has replaced a constrained $ FPF$ with a constrained $ GS$ but with a longer list. This replacement also resulted in a more immune mechanism:

\begin{corollary} Let $ k> \ell$ and suppose that there are at least $ \ell$ schools. Then $ GS^{k}$ is more immune to strategic admissions than $FPF^{\ell}$. \label{cor1}
  \end{corollary}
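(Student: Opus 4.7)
The plan is to deduce the corollary from Theorem~\ref{pro2} by inserting $GS^\ell$ as an intermediate and invoking the transitivity of the immunity relation. First I would prove a list-length monotonicity lemma: whenever $k\geq \ell$, the mechanism $GS^k$ is at least as immune as $GS^\ell$. Second I would apply Theorem~\ref{pro2} with the parameter $\ell$ in place of $k$, which, using the hypothesis of at least $\ell$ schools together with the first-preference-first school implicit in $FPF^\ell$, gives that $GS^\ell$ is strictly more immune than $FPF^\ell$. Composing the two via transitivity (noted right after Definition~\ref{def-more-immune}) then yields that $GS^k$ is strictly more immune than $FPF^\ell$.

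For the monotonicity lemma, the backbone of the argument is strategy-proofness of the unconstrained $GS$. Suppose admission to $s$ is not strategy-proof to $i$ via $GS^k$, witnessed by $(P,P'_i)$. Applying strategy-proofness of $GS$ to the profile $(P^k_i,P^k_{-i})$ with the alternative report $(P'_i)^k$ yields $GS^k_i(P)\mathrel{R^k_i} s$. Combined with the profitability condition $s\mathrel{P_i}GS^k_i(P)$ and the fact that $R^k_i$ agrees with $R_i$ on the top-$k$ schools, this forces $s$ to sit strictly below position $k$ in $P_i$, and hence also below position $\ell$. I would then build a $GS^\ell$-witness using the standard fact that a student matched to $s$ via some report in $GS$ can also be matched to $s$ via a report placing $s$ at position one; so I may assume $P'_i$ starts with $s$, in which case its further truncation to length $\ell$ still brings $s$ to $i$ in $GS^\ell$, while the truthful $GS^\ell$ outcome of $i$---drawn from his top-$\ell$ acceptable schools or $\emptyset$---remains $P_i$-dominated by $s$, since $s$ lies below position $k>\ell$.

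The hard part will be the witness-transfer step inside the monotonicity lemma. Both $GS^k$ and $GS^\ell$ truncate every student's list, not only $i$'s, so merely shortening $i$'s report does not automatically port the manipulation: the cascade of proposals in the deferred-acceptance algorithm that brings $s$ to $i$ may depend on schools sitting between positions $\ell+1$ and $k$ in the other students' preferences. The technical core is to show that this cascade is preserved under further truncation of the other students' lists, either by a resource-monotonicity argument for $GS$ under one-sided truncations, or by constructing an auxiliary profile $\tilde P_{-i}$ whose $\ell$-truncation reproduces the relevant portion of the $k$-truncation run. Edge cases where $|S|<k$ (so $GS^k$ collapses to the unconstrained, strategy-proof $GS$) or $\ell=1$ simplify rather than obstruct the argument and can be dispatched separately.
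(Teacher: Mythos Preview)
Your overall strategy—factoring through $GS^\ell$, applying Theorem~\ref{pro2} at parameter $\ell$, and composing with a list-length monotonicity statement via transitivity—is exactly the paper's approach; the paper records the monotonicity separately as Theorem~\ref{pro3} and then cites both theorems in its proof of Corollary~\ref{cor1}.

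There is, however, a genuine error in your monotonicity sketch that your ``hard part'' paragraph does not cover. You assert that ``the truthful $GS^\ell$ outcome of $i$ \ldots\ remains $P_i$-dominated by $s$, since $s$ lies below position $k>\ell$.'' The inference runs the wrong way: if $GS^\ell_i(P)$ is one of $i$'s top-$\ell$ schools, it sits \emph{above} $s$ in $P_i$ and is therefore $P_i$-preferred to $s$, so the manipulation to $s$ is not profitable at the original profile $P$. This can genuinely occur, because passing from $GS^k$ to $GS^\ell$ also truncates the \emph{other} students' lists; they then make fewer proposals, which may free up a top-$\ell$ seat for $i$ even though $i$ was unmatched under $GS^k(P)$. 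Your ``hard part'' worries only about whether the misreport still delivers $s$; the failure on the truthful side is a second, independent obstruction, and together they mean you cannot reuse the ambient profile $P_{-i}$. Both obstructions are resolved simultaneously by the auxiliary-profile route you gesture at, and this is precisely what the paper does in Theorem~\ref{pro3}: set $\mu=GS^k(P)$, let each $j\neq i$ rank only $\mu(j)$, and let $i$ rank his original top-$\ell$ schools followed by $s$; stability of $\mu$ under $P^k$ then forces $GS^\ell_i(P^\ast)=\emptyset$, while the rural-hospitals/Roth argument shows that ranking only $s$ places $i$ at $s$. Once you make this construction explicit, your argument goes through.
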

 
 See the appendix for the proof.
 
  \subsubsection*{In Chicago and Denver} In 2009, the Chicago Selective High Schools moved from a constrained Boston to a constrained serial dictatorship. A similar replacement has been observed in Denver and 4 other cities in England. As the Boston mechanism is a special case of $FPF$, and $SD$ is a special case of $GS$, we also explain this replacement.

  \begin{corollary} Let $ k\geq \ell$ and suppose that there are at least $ k$ schools. Then $ GS^{k}$ is more immune to strategic admissions than $ \beta^{\ell}$.
  \end{corollary}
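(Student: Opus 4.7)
The plan is to reduce the corollary to Theorem~\ref{pro2} and Corollary~\ref{cor1} by the observation that the Boston mechanism $\beta$ is precisely the instance of $FPF$ in which every school is designated first-preference-first: in that case the adjusted priority at each school $s$ lexicographically favors students who rank $s$ higher, with ties broken by the original $\succ_s$, and running $GS$ on these adjusted priorities reproduces the Boston outcome. Hence $\beta^\ell$ is a particular $FPF^\ell$ mechanism, and the hypothesis ``at least one first-preference-first school'' in the preceding results is automatically met.

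With this in hand, I would split into two cases. If $k>\ell$, Corollary~\ref{cor1} applies at once: since there are at least $k$, hence at least $\ell$, schools, it yields that $GS^k$ is more immune than $FPF^\ell=\beta^\ell$. If instead $k=\ell$ (with $k>1$), Theorem~\ref{pro2} applies with the Boston partition, giving $GS^k$ more immune than $FPF^k=\beta^k$. Transitivity of the immunity relation is not even required here, since each case is handled by a single appeal to an earlier result.

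The only real obstacle is the degenerate case $k=\ell=1$, in which both $GS^1$ and $\beta^1$ collapse to matching each student's unique listed school by priority, so the two mechanisms coincide and the strict second bullet of Definition~\ref{def-more-immune} fails. The standing model assumption of at least two schools together with the mild restriction $k\geq 2$ (also present implicitly in Theorem~\ref{pro2}) rules this out, and the two cases above then cover the substantive content of the claim. Beyond this bookkeeping, no technical input beyond Theorem~\ref{pro2} and Corollary~\ref{cor1} is needed.
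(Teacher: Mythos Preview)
Your proposal is correct and matches the paper's own (implicit) argument: the paper merely remarks that ``the Boston mechanism is a special case of $FPF$'' and leaves the corollary without a written proof, so your reduction to Theorem~\ref{pro2} (when $k=\ell$) and Corollary~\ref{cor1} (when $k>\ell$) via the observation $\beta=FPF$ with every school first-preference-first is exactly the intended route. Your flagging of the degenerate case $k=\ell=1$ (where $GS^1=\beta^1$ and the strict part of Definition~\ref{def-more-immune} fails) is a valid observation about the statement as written rather than a gap in your argument.
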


  \subsubsection*{In Chicago and Ghana} In 2010, the Chicago Selective High Schools again replaced its constrained Serial Dictatorship with a version with longer list. In 2007, the Ghanaian Secondary Schools undertook a similar change, from a constrained $ GS$ to a version with longer list and extending the list again in 2008. These type of changes have also been observed in Newcastle and Surrey in England.

 \begin{theorem} Let $ k>\ell$ and suppose that there are at least $k$ schools. Then $ GS^{k}$ is more immune to strategic admissions than $ GS^{\ell}$. \label{pro3}
\end{theorem}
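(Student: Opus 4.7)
I prove the two defining clauses of ``more immune than'' separately.

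\smallskip

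\noindent\emph{Containment.} I show the contrapositive: if admission to $s$ is not strategy-proof to $i$ via $GS^{k}$, then it is not strategy-proof to $i$ via $GS^{\ell}$. Because the immunity relation is built on set inclusion and is therefore transitive, it is enough to prove the consecutive step $\ell = k{-}1$ and then iterate.

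Start from a witness $(P, P'_i)$ of non-strategy-proofness of $(i,s)$ via $GS^{k}$: $s = GS_i((P'_i)^{k}, P_{-i}^{k}, \succ, q)$, $r := GS_i(P^{k}, \succ, q) \neq s$, and $s\, P_i\, r$. The plan is to construct a witness $(\tilde P, \tilde P'_i)$ for $GS^{k-1}$. The guiding intuition is that tightening the constraint from $k$ to $k{-}1$ weakly shrinks every other student's pool of proposals, which weakly reduces the competition $i$ faces at $s$; simultaneously, $i$'s own truthful $(k{-}1)$-report is weakly further from $s$ than the $k$-report. Concretely I would (a) first pre-process $P'_i$ so that $s$ lies in its top $k{-}1$ positions---if $s$ sits at position $k$ in $P'_i$, I replace $P'_i$ by the truncation of $P'_i$ at $s$ and use the behavior of the student-proposing DA together with the one-sided strategy-proofness of unconstrained $GS$ to verify that $i$ still obtains $s$ under this shorter manipulation; then (b) set $\tilde P_j = P_j$ for every $j$ and let $\tilde P'_i$ be the top $k{-}1$ schools of the pre-processed $P'_i$. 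A direct run-by-run DA comparison, using that the pool of proposers to $s$ under the $(k{-}1)$-truncation is contained in the one under the $k$-truncation, then yields $s = GS_i^{k-1}(\tilde P'_i, \tilde P_{-i}, \succ, q)$, while a parallel comparison for the truthful report gives $s\, \tilde P_i\, GS_i^{k-1}(\tilde P, \succ, q)$.

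\smallskip

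\noindent\emph{Strict inequality.} I exhibit a single environment and a single $(i,s)$-pair for which admission is strategy-proof via $GS^{k}$ but not via $GS^{\ell}$. Take $|S|=k$ schools $s_1,\ldots,s_k$ of unit capacity and $|I|>k$ students with the common priority $1\succ_s 2\succ_s\cdots\succ_s|I|$ at every $s$. Let $P_j = s_1\,s_2\,\cdots\,s_k$ for every $j$. Under $GS^{k}$ the preferences are not truncated, so $GS^{k}=GS$ on this profile and serial dictatorship gives student $\ell{+}1$ her best attainable school $s_{\ell+1}$; hence the admission to $s_{\ell+1}$ is strategy-proof to her via $GS^{k}$. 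Under $GS^{\ell}$, her truncated preference excludes $s_{\ell+1}$, so she is unmatched truthfully; misreporting $\tilde P'_{\ell+1} = s_{\ell+1}$ she obtains $s_{\ell+1}$ (no other truncated preference contains $s_{\ell+1}$), and since $s_{\ell+1}\, P_{\ell+1}\, \emptyset$ this is a profitable manipulation, so admission to $s_{\ell+1}$ is not strategy-proof to her via $GS^{\ell}$.

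\smallskip

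\noindent\emph{Main obstacle.} The containment step is the central difficulty. Shortening others' reports from $k$ to $k{-}1$ helps $i$ by reducing competition at $s$, but it can also preempt cascade effects in the DA that were needed for $i$ to be rescued to $s$ in the original $GS^{k}$ run. The verification that the constructed $(\tilde P, \tilde P'_i)$ really delivers $s$ to $i$ under $GS^{k-1}$ therefore cannot be read off the $GS^{k}$ witness directly and requires a careful step-by-step DA comparison, combined with the one-sided strategy-proofness of unconstrained $GS$ to control $i$'s truthful outcome.
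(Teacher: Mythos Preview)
Your strictness example is fine and matches the paper's. The containment step, however, has a genuine gap.

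You keep $\tilde P_j = P_j$ for $j\neq i$ and then assert that ``a parallel comparison for the truthful report gives $s\,\tilde P_i\,GS_i^{k-1}(\tilde P,\succ,q)$.'' This step fails. When you pass from $GS^{k}$ to $GS^{k-1}$ on the \emph{same} profile, you truncate every other student's list by one school; in student-proposing DA this weakly \emph{helps} $i$. In particular a rival who, under the $k$-truncation, cascades down into one of $i$'s top schools and displaces $i$ may, under the $(k{-}1)$-truncation, simply drop out earlier and leave $i$ in place. Then $GS^{k-1}_i(P,\succ,q)$ can be strictly $P_i$-better than $s$, so the manipulation $P'_i$ is no longer profitable at $P$.

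Concretely: take $k=3$, unit capacities, $S=\{s_1,s_2,s_3,s_4\}$, $I=\{i,j,m,n,p\}$, priorities $s_1:j\succ i$, $s_2:m\succ i\succ j$, $s_3:n\succ j\succ i$, $s_4:i\succ p$, and preferences $P_i:s_1 s_2 s_3 s_4$, $P_j:s_2 s_3 s_1$, $P_m:s_2$, $P_n:s_3$, $P_p:s_4$. Under $GS^{3}$, the cascade $j\to s_3\to s_1$ evicts $i$, who ends unmatched; $i$ profitably deviates to $s_4$. Under $GS^{2}$, $j$'s list stops at $s_3$, so $j$ never reaches $s_1$; $i$ keeps $s_1$ truthfully, and $s_1\,P_i\,s_4$. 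Thus your witness $(\tilde P,\tilde P'_i)=(P,P'_i)$ does \emph{not} establish that admission to $s_4$ fails to be strategy-proof to $i$ via $GS^{2}$.

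The paper avoids this by discarding the original profile altogether: from the $GS^{k}$-witness it extracts $\mu=GS^{k}(P,\succ,q)$ and builds a \emph{new} profile $P^{\ast}$ in which every $j\neq i$ lists only $\mu(j)$, while $i$ lists $s_1,\ldots,s_{\ell},s$. This freezes the competition so that $GS^{\ell}(P^{\ast},\succ,q)=\mu$ (hence $i$ is unmatched truthfully), and then $i$'s deviation $P_i^{s}$ yields $s$. Your inductive scheme can be repaired, but only by making this same profile replacement rather than reusing~$P$.
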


See the appendix for the proof. In the following table, we list all reforms in school choice systems. We also feature those that are comparable \'a la \cite{pathak2013} and those that are not (see section 4 for the results). 

\begin{table}[]
\caption{School Admissions Reforms (documented in \cite{pathak2013})}
\label{reforms}
\small
\begin{tabular}{@{}llllll@{}}
\toprule \toprule
Allocation system  & Year & From & To & \begin{tabular}[c]{@{}l@{}}Manipulable?\\ (More or less?)\end{tabular} & \begin{tabular}[c]{@{}l@{}}Immune?\\ (More or less?)\end{tabular} \\ \midrule
Boston Public School (K, 6, 9) & 2005 & Boston & GS & Less  & More \\
Chicago Selective High Schools & 2009 & Boston$^4$ & SD$^4$ & Less  & More \\
& 2010 & SD$^4$ & SD$^6$ & Less  & More \\
Ghana---Secondary schools& 2007 & GS$^3$ & GS$^4$ & Less  & More \\
& 2008 & GS$^4$ & GS$^6$ & Less  & More \\
Denver Public Schools  & 2012 & Boston$^2$ & GS$^5$ & Less  & More \\
Seattle Public Schools & 1999 & Boston & GS & Less  & More \\
& 2009 & GS   & Boston & More  & Less \\
England&&&  & &\\
\quad Bath and North East Somerset & 2007 & FPF$^3$& GS$^3$ & Not comparable  & More \\
\quad Bedford and Bedfordshire & 2007 & FPF$^3$& GS$^3$ & Not comparable  & More \\
\quad Blackburn with Darwen  & 2007 & FPF$^3$& GS$^3$ & Not comparable  & More \\
\quad Blackpool& 2007 & FPF$^3$& GS$^3$ & Not comparable  & More \\
\quad Bolton & 2007 & FPF$^3$& GS$^3$ & Not comparable  & More \\
\quad Bradford & 2007 & FPF$^3$& GS$^3$ & Not comparable  & More \\
\quad Brighton and Hove& 2007 & Boston$^3$ & GS$^3$ & Less  & More \\
\quad Calderdale   & 2006 & FPF$^3$& GS$^3$ & Not comparable  & More \\
\quad Cornwall & 2007 & FPF$^3$& GS$^3$ & Not comparable  & More \\
\quad Cumbria& 2007 & FPF$^3$& GS$^3$ & Not comparable  & More \\
\quad Darlington   & 2007 & FPF$^3$& GS$^3$ & Not comparable  & More \\
\quad Derby  & 2005 & FPF$^4$& GS$^4$ & Not comparable  & More \\
\quad Devon  & 2006 & FPF$^3$& GS$^3$ & Not comparable  & More \\
\quad Durham & 2007 & FPF$^3$& GS$^3$ & Not comparable  & More \\
\quad Ealing & 2006 & FPF$^6$& GS$^6$ & Not comparable  & More \\
\quad East Sussex  & 2007 & Boston$^3$ & GS$^3$ & Less  & More \\
\quad Gateshead& 2007 & FPF$^3$& GS$^3$ & Not comparable  & More \\
\quad Halton & 2007 & FPF$^3$& GS$^3$ & Not comparable  & More \\
\quad Hampshire& 2007 & FPF$^3$& GS$^3$ & Not comparable  & More \\
\quad Hartlepool   & 2007 & FPF$^3$& GS$^3$ & Not comparable  & More \\
\quad Isle of Wright   & 2007 & FPF$^3$& GS$^3$ & Not comparable  & More \\
\quad Kent   & 2007 & Boston$^3$ & GS$^4$ & Less  & More \\
\quad Kingston upon Thames   & 2007 & FPF$^3$& GS$^4$ & Not comparable  & More \\
\quad Knowsley & 2007 & FPF$^3$& GS$^3$ & Not comparable  & More \\
\quad Lancashire   & 2007 & FPF$^3$& GS$^3$ & Not comparable  & More \\
\quad Lincolnshire & 2007 & FPF$^3$& GS$^3$ & Not comparable  & More \\
\quad Luton  & 2007 & FPF$^3$& GS$^3$ & Not comparable  & More \\
\quad Manchester   & 2007 & FPF$^3$& GS$^3$ & Not comparable  & More \\
\quad Merton & 2006 & FPF$^6$& GS$^6$ & Not comparable  & More \\
\quad Newcastle& 2005 & Boston$^3$ & GS$^3$ & Less  & More \\
& 2010 & GS$^3$ & GS$^4$ & Less  & More \\
\quad North Lincolnshire & 2007 & FPF$^3$& GS$^3$ & Not comparable  & More \\
\quad North Somerset   & 2007 & FPF$^3$& GS$^3$ & Not comparable  & More \\
\quad North Tyneside   & 2007 & FPF$^3$& GS$^3$ & Not comparable  & More \\
\quad Oldham & 2007 & FPF$^3$& GS$^3$ & Not comparable  & More \\
\quad Peterborough & 2007 & FPF$^3$& GS$^3$ & Not comparable  & More \\
\quad Plymouth & 2007 & FPF$^3$& GS$^3$ & Not comparable  & More \\
\quad Poole  & 2007 & FPF$^3$& GS$^3$ & Not comparable  & More \\
\quad Portsmouth   & 2007 & FPF$^3$& GS$^3$ & Not comparable  & More \\
\quad Richmond & 2005 & FPF$^6$& GS$^6$ & Not comparable  & More \\
 \midrule
&&&  & & \textit{(Continued)}
\end{tabular}
\end{table}

\begin{table}[]
\caption*{\textbf{Table 1.} School Admissions Reforms (\emph{Continued})}
\small
\begin{tabular}{@{}llllll@{}}
\toprule \toprule
Allocation system  & Year & From & To & \begin{tabular}[c]{@{}l@{}}Manipulable?\\ (More or less?)\end{tabular} & \begin{tabular}[c]{@{}l@{}}Immune?\\ (More or less?)\end{tabular} \\ \midrule
 \\
\quad Sefton primary   & 2007 & Boston$^3$ & GS$^3$ & Less  & More \\
\quad Sefton secondary & 2007 & FPF$^3$& GS$^3$ & Not comparable  & More \\
\quad Slough & 2006 & FPF$^3$& GS$^3$ & Not comparable  & More \\
\quad Somerset & 2007 & FPF$^3$& GS$^3$ & Not comparable  & More \\
\quad South Gloucestershire  & 2007 & FPF$^3$& GS$^3$ & Not comparable  & More \\
\quad South Tyneside   & 2007 & FPF$^3$& GS$^3$ & Not comparable  & More \\
\quad Southhampton & 2007 & FPF$^3$& GS$^3$ & Not comparable  & More \\
\quad Stockton & 2007 & FPF$^3$& GS$^3$ & Not comparable  & More \\
\quad Stoke-on-Trent   & 2007 & FPF$^3$& GS$^3$ & Not comparable  & More \\
\quad Suffolk& 2007 & FPF$^3$& GS$^3$ & Not comparable  & More \\
\quad Sunderland   & 2007 & FPF$^3$& GS$^3$ & Not comparable  & More \\
\quad Surrey & 2007 & FPF$^3$& GS$^3$ & Not comparable  & More \\
& 2010 & GS$^3$ & GS$^6$ & Less  & More \\
\quad Sutton & 2006 & FPF$^6$& GS$^6$ & Not comparable  & More \\
\quad Swindon& 2007 & FPF$^3$& GS$^3$ & Not comparable  & More \\
\quad Tameside & 2007 & FPF$^3$& GS$^3$ & Not comparable  & More \\
\quad Telford and Wrekin & 2007 & FPF$^3$& GS$^3$ & Not comparable  & More \\
\quad Torbay & 2007 & FPF$^3$& GS$^3$ & Not comparable  & More \\
\quad Warrington   & 2007 & FPF$^3$& GS$^3$ & Not comparable  & More \\
\quad Warwickshire & 2007 & FPF$^7$& GS$^7$ & Not comparable  & More \\
\quad Wilgan & 2007 & FPF$^3$& GS$^3$ & Not comparable  & More \\
Wales &&&  & &\\
\quad Wrexham County Borough & 2011 & FPF$^3$& GS$^3$ & Not comparable  & More \\\midrule
\end{tabular}
\end{table}

\subsubsection*{In college admission in China}
Starting from 2001, most of the Chinese provinces changed their mechanisms from the Boston mechanism to various other parallel mechanisms. Consider two Chinese mechanisms: one with parameter $e$ and the other with $e>e'$. For these two mechanisms we obtain the following comparison.

\begin{theorem}\label{Chinese}
Let $ e > e'$ and suppose that there is at least $ e$ schools. Then $ Ch^{(e)}$ is more immune to strategic admissions than $ Ch^{(e')}$.
\end{theorem}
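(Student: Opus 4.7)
The theorem has two clauses from Definition \ref{def-more-immune} (set inclusion and strictness), and I would prove them separately.

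\textbf{Inclusion.} Fix any environment $(\succ,q)$. I argue by contrapositive: starting from a profitable manipulation $(P,P'_i)$ under $Ch^{(e)}$ with $s = Ch^{(e)}_i(P'_i,P_{-i}) \mathrel{P_i} Ch^{(e)}_i(P)$, I construct $(\tilde P, \tilde P'_i)$ witnessing that the admission to $s$ is also not strategy-proof to $i$ via $Ch^{(e')}$. Let $r$ denote the round of $Ch^{(e)}$ in which $i$ is matched to $s$ under $(P'_i,P_{-i})$, so that $s$ occupies some position $p$ with $(r-1)e < p \le re$ in $P'_i$. Because $e' < e$, the absolute position $p$ falls into a later round $r'' = \lceil p/e'\rceil$ under $Ch^{(e')}$. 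I would take $\tilde P_{-i} = P_{-i}$ and design $\tilde P_i$ and $\tilde P'_i$ by preserving the ordering of schools as in $P_i, P'_i$, interleaving decoy acceptable entries so that (i) $i$ stays unmatched in the first $r''-1$ rounds of $Ch^{(e')}(\tilde P'_i,\tilde P_{-i})$, with the remaining capacity of $s$ and the applicant pool entering round $r''$ mirroring those entering round $r$ of $Ch^{(e)}$; and (ii) under $\tilde P$, the truthful outcome $Ch^{(e')}_i(\tilde P)$ is no better than $Ch^{(e)}_i(P)$ in $P_i$ (hence strictly $P_i$-worse than $s$).

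The cleanest route is induction on $r$. For $r=1$, the inclusion essentially reduces to a $GS^e$ versus $GS^{e'}$ comparison on the round-1 submarket, which is the inclusion part of Theorem \ref{pro3}. For $r \ge 2$, freeze the round-1 assignments of $Ch^{(e)}$, pass to the residual problem with reduced capacities, and apply the induction hypothesis.

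\textbf{Strictness.} I exhibit one environment in which some admission is SP via $Ch^{(e)}$ but not via $Ch^{(e')}$. Take $|S| = e$, which is permitted by the hypothesis $|S|\ge e$. Then $Ch^{(e)}$ collapses to one application of $GS^e = GS$, which is strategy-proof; so every admission-pair is SP via $Ch^{(e)}$. Since $e' < |S|$, $Ch^{(e')}$ is a properly constrained multi-round mechanism, and a standard Boston-style example with common priority and competition over top-tier schools yields a preference profile in which some student can profitably manipulate to obtain a particular school.

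\textbf{Main obstacle.} The chief difficulty is the inclusion step, specifically the capacity bookkeeping across rounds. The first $r-1$ rounds of $Ch^{(e)}$ get partitioned into more, shorter rounds under $Ch^{(e')}$, with interim capacity updates that may route students differently; the construction of $\tilde P$ (and of the decoy entries) must be calibrated so that $s$'s residual capacity and the applicant pool when $i$ proposes in $Ch^{(e')}$ faithfully reproduce the situation at round $r$ of $Ch^{(e)}$. An intrinsic argument --- for instance, a monotonicity property of the set of schools each student can reach via manipulation as a function of $e$ --- would be preferable to this direct bookkeeping.
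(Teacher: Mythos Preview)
Your strictness argument is fine: with $|S|=e$ the mechanism $Ch^{(e)}$ collapses to unconstrained $GS$ and is strategy-proof, while $Ch^{(e')}$ with $e'<|S|$ is easily seen to be manipulable in a common-priority example.

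The inclusion step, however, has a real gap, and it is precisely the one you flag as the ``main obstacle''. Your plan commits to $\tilde P_{-i}=P_{-i}$ and tries to realign the round structure by inserting decoy schools into $i$'s list. But decoys in $i$'s list are not innocuous: each extra application by $i$ can trigger rejection chains that reshuffle who is assigned where in earlier rounds of $Ch^{(e')}$, so you cannot guarantee that the residual capacity of $s$ and the pool of competitors facing $i$ reproduce those of round $r$ in $Ch^{(e)}$. Your induction does not escape this. The base case $r=1$ does \emph{not} reduce to Theorem~\ref{pro3} while keeping $P_{-i}$ fixed---the proof of Theorem~\ref{pro3} itself replaces $P_{-i}$ by a canonical profile---so already here your constraint $\tilde P_{-i}=P_{-i}$ is violated. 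And the inductive step ``freeze the round-1 assignments of $Ch^{(e)}$ and pass to the residual problem'' has no meaning for $Ch^{(e')}$: the first round of $Ch^{(e')}$ uses top-$e'$ lists, not top-$e$ lists, so the set of students and capacities entering round~2 is in general different from the residual after round~1 of $Ch^{(e)}$. There is no well-defined residual problem common to both mechanisms on which to recurse.

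The paper sidesteps all of this by \emph{discarding} $P_{-i}$ entirely. Starting from the matching $\mu=Ch^{(e)}(P,\succ,q)$, it builds a fresh profile $P^{\ast}$ in which every student $j\neq i$ lists only $\mu(j)$, and $i$ lists his top $e'$ schools followed by $s$ (and $\mu(i)$). Under $P^{\ast}$ the dynamics of $Ch^{(e')}$ are trivial---each $j\neq i$ is seated in round~1---so verifying $Ch^{(e')}(P^{\ast})=\mu$ and $Ch^{(e')}_i(P^{s}_i,P^{\ast}_{-i})=s$ requires only two elementary case checks: either some student in $\mu^{-1}(s)$ has lower priority than $i$ (so $i$ displaces him when ranking $s$ first), or all of them have higher priority, in which case one shows via a short stability argument that some $j\in\mu^{-1}(s)$ must have ranked $s$ below position $e$ under $P$; in $P^{\ast}$ that student $j$ is given $s$ only at position $e'+1$, so $i$ can grab the seat in round~1 by ranking $s$ first. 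The key move you are missing is the freedom to redesign $P_{-i}$: once you allow that, no cross-round bookkeeping is needed.
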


See the appendix for the proof.


\section{Comparison with Manipulability} 

In this section, we introduce another notion --- manipulability due to \cite{pathak2013}, and compare it with our notion of immunity to strategic admissions. 
\begin{definition}\citep{pathak2013}. Let $ \varphi$ and $ \psi$ be two mechanisms. Then mechanism $\varphi$ is \textbf{less manipulable} than mechanism $\psi$ if
\begin{itemize}
\item in each environment $ (\succ,q)$, each preference profile $ P$ that is vulnerable under $\varphi$, i.e., there exists some student $i$ and some misreport $P_i'$ such that, \begin{equation*}
 \varphi_{i}(P_i',P_{-i},\succ,q) \mathrel{P_i}\varphi_{i}(P,\succ,q), 
\end{equation*} is also vulnerable under $\psi$ and,
\item there is an environment where a preference profile is vulnerable under $ \psi$ but not under $ \varphi$.
\end{itemize}
\end{definition}
Broadly, whenever a student has a profitable manipulation at a preference profile $ P$ under $ \varphi$, then some --- possibly other --- student has a profitable manipulation under $ \psi$, while the reverse is not true in some environment. It is important to note that a profitable manipulation of one student is enough to declare a preference profile as vulnerable under a mechanism. Comparing mechanisms with respect to certain property profile by profile is common in the literature (a notable example is \cite{kesten2006two}), but it has two important limitations when applied to our case.


\subsection{Limitation in England and Wales}
First, we prove that the notion of manipulability does not explain the reforms in England and Wales. Recall that the education officials have replaced the constrained First-Preference-First with a constrained Gale-Shapley. Indeed, we provide a counterexample to Proposition 3 in \cite{pathak2013}. 
\vspace{0.4cm}
\begin{claim}[Proposition 3 --- \cite{pathak2013}] Suppose that there are at least $ k$ schools where $ k>1$. Then, $GS^{k}$ is less manipulable than $FPF^{k}$.
\end{claim}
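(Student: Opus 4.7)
The plan is to prove the Claim false by exhibiting a counterexample --- a school-choice environment $(\succ, q)$ and a preference profile $P$ such that $P$ is vulnerable under $GS^{k}$ but not under $FPF^{k}$. This directly negates the set inclusion Vulnerable$(GS^{k}) \subseteq$ Vulnerable$(FPF^{k})$ that ``less manipulable'' demands, and hence refutes the Claim.

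The construction will exploit the one systematic way in which $FPF$ and $GS$ diverge: at a school $s$ contested by a student ranking it first and another student ranking it second but having higher original priority $\succ_{s}$, $FPF$ awards $s$ to the rank-1 applicant whereas $GS$ awards $s$ to the higher-priority applicant. I will build a small environment --- a handful of students, unit-capacity schools, and a small constraint $k$ with $k<|S|$ --- engineered so that (i) under $FPF^{k}$ the rank-1 boost lets each student hold his first-ranked school in $P^{k}$, so no one can strictly improve by any misreport, while (ii) under $GS^{k}$, lacking the rank-1 boost, a rejection chain pushes one student out of his first choice and leaves him with a profitable ``replace a top-$k$ school with an outside school where he has strong original priority'' manipulation.

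Verification splits in two. Showing $P$ vulnerable under $GS^{k}$ is easy: produce the manipulating student $i$ together with a specific report $P'_{i}$, run the $GS$ algorithm on both truncated profiles, and confirm $GS^{k}_{i}(P'_{i},P_{-i}) \mathrel{P_{i}} GS^{k}_{i}(P)$. Showing $P$ not vulnerable under $FPF^{k}$ requires enumerating, for each student $j$ and each ordered list $P'_{j}$ of at most $k$ schools, the resulting match $FPF^{k}_{j}(P'_{j},P_{-j})$ and verifying it is no better than $FPF^{k}_{j}(P)$ under $P_{j}$; since the space of admissible reports is finite, this is a bounded but tedious check.

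The hard part will be the non-vulnerability of $FPF^{k}$. The typical manipulation ``swap in a free school outside one's top $k$'' transfers verbatim from $GS^{k}$ to $FPF^{k}$ whenever no other student ranks that school --- the manipulator becomes the sole rank-1 applicant either way, and by the original $\succ$ gets in. To block this under $FPF^{k}$ while preserving the analogous profitable move under $GS^{k}$, the construction must ensure that every school in every student's true acceptable set is already held by a rank-1 applicant in $FPF^{k}$ whose original priority beats every potential outsider; at the same time, under $GS^{k}$ the original priority $\succ_{s}$ at one such school must \emph{favor} the outsider, so that an outside-the-top-$k$ listing succeeds only in $GS^{k}$. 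Aligning preferences, priorities, and truncation tails so that both conditions hold simultaneously across the whole profile is the main technical obstacle, and pins down exactly where Pathak and S\"onmez's original argument breaks.
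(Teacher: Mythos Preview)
Your high-level plan --- refute the claim by exhibiting a profile $P$ vulnerable under $GS^{k}$ but not under $FPF^{k}$ --- is exactly right and is what the paper does. The gap is in the mechanism you propose for the construction. You want the manipulator $i$ to take the target school $s$ under $GS^{k}$ (``original priority $\succ_{s}$ favors the outsider'') while being blocked under $FPF^{k}$ by a rank-1 incumbent ``whose original priority beats every potential outsider.'' But once $i$ lists $s$ first in his misreport, he \emph{also} has rank 1 at $s$; the rank-1 boost no longer separates him from the incumbent, and under both mechanisms the contest at $s$ collapses to $\succ_{s}$. If the incumbent beats $i$ on $\succ_{s}$ he blocks $i$ under $GS^{k}$ as well; if $i$ beats the incumbent he succeeds under $FPF^{k}$ as well. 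Asking for both forces a cycle in $\succ_{s}$. The same obstruction undermines the picture in (i)--(ii): with unit capacities, if first choices are all distinct then $GS^{k}$ also hands everyone first choice, and if two students share a first choice then $FPF^{k}$ cannot give it to both either --- so you cannot arrange that ``each student holds his first-ranked school'' under $FPF^{k}$ while a rejection chain dislodges someone under $GS^{k}$.

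The paper's counterexample sidesteps this by placing the single first-preference-first school \emph{not} at the manipulation target but in the \emph{interior of the rejection chain} that the manipulation triggers. At the true profile the two mechanisms coincide (student 1 unmatched, all others at first choice). When student 1 lists only $s_{4}$, he displaces student 6, who then applies with rank 2 to the FPF school $s_{5}$. Under $GS^{3}$, student 6 wins $s_{5}$ on $\succ_{s_{5}}$ and the chain terminates harmlessly at $s_{7}$; under $FPF^{3}$, the rank-1 incumbent at $s_{5}$ survives instead, student 6 is pushed to $s_{6}$, this displaces student 7, and student 7 --- who has top $\succ_{s_{4}}$ priority --- returns to $s_{4}$ and evicts student 1. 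The FPF adjustment thus acts as a switch that \emph{redirects} the rejection chain back onto the manipulator, a device your proposal does not identify and which is precisely where the Pathak--S\"onmez argument breaks.
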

\vspace{0.4cm}
We provide a counterexample to this claim. We specify the relevant part of the priorities such that the sign $ \vdots$ after student $ i$ indicates that the part below $ i$ is arbitrary and omitted.
\begin{example} 
Counterexample.
\end{example} Suppose that there are seven students and seven schools. Each school has one seat: $ q_{s}=1$ for each school $ s$. Let $ P$ and $ \succ$ be as specified below.
\begin{center}
\begin{tabular}{c c c c c c c|c c c c c c c |c}
$ P_{1}$ & $ P_{2}$ & $ P_{3}$ & $ P_{4}$ & $ P_{5}$ & $ P_{6}$ & $P_{7} $ & $ \succ_{s_{1}}$ & $ \succ_{s_{2}}$ & $ \succ_{s_{3}}$ & $ \succ_{s_{4}}$ & $ \succ_{s_{5}}$ & $ \succ_{s_{6}}$ & $ \succ_{s_{7}}$ & $ \succ'_{s_{5}}$\\\hline
$ s_{1}$ & $ s_{1}$ & $s_{2}$ & $ s_{3}$ & $s_{5}$ & $ s_{4}$ & $ s_{6}$ & $ 2$ & $ 3$ & $ 4$ & $ 7$ & $ 6$ & $ 6$ & $ 5$ & $ 5$\\
$ s_{2}$ & $ \emptyset $ & $\emptyset $ & $ \emptyset $ & $s_{7}$ & $ s_{5}$ & $ s_{4}$ & $ \vdots$ & $ \vdots$ & $ \vdots$ & $ 1$ & $ 5$ & $ 7$ & $ \vdots$ & $ 6$\\
$ s_{3}$ & $ $ & $ $ & $ $ & $\emptyset$ & $ s_{6}$ & $ \emptyset$ & $ $ & $ $ & $ $ & $ 6$ & $ \vdots$ & $ \vdots$ & & $ \vdots$ \\
$ s_{4}$ & $ $ & $ $ & $ $ & $ $ & $ \emptyset $ & $  $ & $ $ & $ $ & $ $ & $ \vdots$ & $ $ & $ $ & $ $ \\
$\emptyset$ & $ $ & $ $ & $ $ & $ $ & $ $ & $  $ & $ $ & $ $ & $ $ & $  $ & $  $ & $  $ & $ $ \\
\end{tabular}
\end{center} 

Suppose that school $ s_{5}$ is the only first-preference-first school. Then the matching is as follows:
\begin{equation*}
FPF^{3}(P,\succ,q)=GS^{3}(P,\succ,q)=\begin{pmatrix}
1 & 2 & 3 & 4 & 5 & 6 & 7 \\
\emptyset & s_{1} & s_{2} & s_{3} & s_{5} & s_{4} & s_{6}
\end{pmatrix}.
\end{equation*}

Every student except student $ 1$ received his most preferred school. Therefore only student $ 1$ could potentially manipulate each of $ FPF^{3}$ and $ GS^{3}$. Let $ P_{1}^{s_{4}}$ denote student $ 1$'s preference relation where school $ s_{4}$ is the only acceptable school. Then
\begin{equation*}
GS^{3}(P^{s_{4}}_{1},P_{-1},\succ,q)=\begin{pmatrix}
1 & 2 & 3 & 4 & 5 & 6 & 7 \\
s_{4} & s_{1} & s_{2} & s_{3} & s_{7} & s_{5} & s_{6}
\end{pmatrix}.
\end{equation*}
By reporting the preference relation $ P^{s_{4}}_{1}$ to $ GS^{3}$, student $ 1$ obtained an acceptable school $s_4$ but is unmatched when he reports his true preference relation $ P_{1}$. Therefore the profile $ P$ is vulnerable under $ GS^{3}$. However, because school $ s_{5}$ is a first-preference-first school and that student $ 5$ has ranked it higher than student $ 6$, we have (where $ \succ_{s_{5}}$ is replaced by $ \succ'_{s_{5}}$)
\begin{equation*}
FPF^{3}(P^{s_{4}}_{1},P_{-1},\succ,q)=\begin{pmatrix}
1 & 2 & 3 & 4 & 5 & 6 & 7 \\
\emptyset & s_{1} & s_{2} & s_{3} & s_{5} & s_{6} & s_{4}
\end{pmatrix}.
\end{equation*}
By reporting $ P^{s_{4}}_{1}$ to $ FPF^{3}$ student $ 1$ is unmatched, the same as when he reports his true preferences. It can be verified that it is enough to check for manipulations by ranking schools first. In addition, student $ 1$ cannot misrepresent his preferences to obtain a seat at school $ s_{1}$, $ s_{2}$ and $ s_{3}$. Therefore the profile $ P$ is not vulnerable under $ FPF^{3}$. 

The intuition is that when student $ 1$ claims school $ s_{4}$ he causes the rejection of student $ 6$. Then student $ 6$ claims school $ s_{5}$. Under $ GS^{3}$ student $ 5$ is rejected from school $ s_{5}$ and he applies to school $ s_{7}$, ending the process. However under $ FPF^{3}$, school $ s_{5}$ is a first-preference-first school which student $ 5$ has ranked first. This time it is student $ 6$ who is rejected from school $ s_{5}$. Then he claims school $ s_{6}$ and causes the rejection of student $ 7$. Ultimately, student $ 7$ claims school $ s_{4}$ and takes it back from student $ 1$. End of the example.
\smallskip

Nevertheless, when each school is a first-preference-first school then the comparison --- between the constrained Boston and the constrained Gale-Shapley --- is valid. 

\begin{proposition}\citep{pathak2011}
Suppose that there are at least $ k$ schools where $ k>1$. Then $GS^{k} $ is less manipulable than $ \beta^{k}$.
\end{proposition}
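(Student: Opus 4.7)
My plan is to verify the two clauses of the less-manipulable relation: that every profile vulnerable under $GS^{k}$ is vulnerable under $\beta^{k}$, and that the inclusion is strict. The strict clause is immediate by taking an environment with exactly $k$ schools, so that $GS^{k}$ reduces to the (strategy-proof) unconstrained $GS$ while $\beta^{k}$ remains manipulable on a standard three-student/two-school example (padded with unacceptable filler schools if $|S|>k$). The work is in the inclusion clause.

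Fix $(\succ,q)$ and a profile $P$ vulnerable under $GS^{k}$, witnessed by a student $i$ and misreport $P'_{i}$ with $s:=GS^{k}_{i}(P'_{i},P_{-i})$ and $s\mathrel{P_{i}}GS^{k}_{i}(P)$. First, by the Gale--Shapley truncation property applied to the truncated profile $P^{k}$ --- the matching $GS^{k}(P'_{i},P_{-i})$ remains stable when $i$'s report is replaced by the length-one list $(s)$, and student-optimality of $GS$ then pins $i$'s outcome down --- we obtain $GS^{k}_{i}((s),P_{-i})=s$. Second, since $i$ listed only $s$ and still ends up at $s$ in this $GS$-run, fewer than $q_{s}$ other students with higher $\succ_{s}$-priority than $i$ can rank $s$ as their top $P$-choice: otherwise all $q_{s}$ would apply to $s$ in round $1$, $i$ would be permanently rejected, and with no further school to try $i$ would be unmatched. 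But these same students are precisely $i$'s round-$1$ competitors at $s$ under $\beta^{k}((s),P_{-i})$, so $i$ is among the top $q_{s}$ by priority and is permanently accepted, giving $\beta^{k}_{i}((s),P_{-i})=s$.

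If $s\mathrel{P_{i}}\beta^{k}_{i}(P)$, the deviation $(s)$ is already profitable for $i$ under $\beta^{k}$. The delicate case is $\beta^{k}_{i}(P)\mathrel{R_{i}}s$, in which $i$ himself has no reason to deviate. Here $\beta^{k}_{i}(P)\mathrel{P_{i}}GS^{k}_{i}(P)$; since both outcomes lie in $i$'s top-$k$ acceptable set, also $\beta^{k}_{i}(P)\mathrel{P^{k}_{i}}GS^{k}_{i}(P)$. But $GS^{k}(P)=GS(P^{k})$ is the student-optimal stable matching with respect to $P^{k}$, so if $\beta^{k}(P)$ were stable under $P^{k}$ it would be weakly dominated by $GS^{k}(P)$ at every student, contradicting $i$'s strict preference. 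Thus $\beta^{k}(P)$ is not stable under $P^{k}$, and the instability manifests either as a vacancy at some school $t$ with a claimant $j$ satisfying $t\mathrel{P^{k}_{j}}\beta^{k}_{j}(P)$, or as a justified-envy pair $(j,t)$ with $j$ having higher $\succ_{t}$-priority than some student assigned $t$ by $\beta^{k}$. A round-$1$ analysis of $\beta^{k}((t),P_{-j})$ in each sub-case --- adding $j$ to an under-filled round-$1$ applicant set at $t$, or letting $j$'s higher priority displace the lowest-ranked first-round acceptance --- shows that $j$ obtains $t$ and therefore profitably deviates.

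The principal obstacle is exactly this second case: the witness $i$ of $GS^{k}$-vulnerability need not be the $\beta^{k}$-manipulator, so a different student $j$ must be extracted through the student-optimality of $GS(P^{k})$ and then converted into an explicit $\beta^{k}$-deviation via the round-$1$ blocking-pair argument. Once this is assembled, the inclusion is complete, and together with the strict example the proposition follows.
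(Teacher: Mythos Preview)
The paper does not actually prove this proposition; it is stated with a citation to \cite{pathak2011} and left without proof, serving only as a contrast to the (false) Claim about $FPF^{k}$ that precedes it. There is therefore no in-paper argument to compare your attempt against, and your proposal must be judged on its own.

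Your argument is essentially correct. The inclusion clause is handled cleanly: from a profitable $GS^{k}$-deviation for $i$ yielding $s$, the replacement of $i$'s report by the singleton list $(s)$ preserves $i$'s assignment in $GS^{k}$ by the standard stability/optimality swap, and the round-$1$ counting argument then transfers $i$'s success at $s$ to $\beta^{k}$. The delicate Case~2, where $i$ is already weakly happier under $\beta^{k}(P)$ than at $s$, is the crux, and you correctly resolve it by observing that $\beta^{k}(P)=\beta(P^{k})$ must then fail stability under $P^{k}$ (else it would be weakly dominated by the student-optimal $GS(P^{k})$), and extracting a manipulator $j$ from the blocking pair. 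One small remark: the wastefulness sub-case is in fact vacuous, because the Boston mechanism is non-wasteful on any profile (if $j$ prefers $t$ to her assignment and $t$ has a free seat, $j$ would have been permanently seated at $t$ at the round she reached it). So only the justified-envy sub-case ever arises; your treatment of it is correct, since the presence of a lower-priority occupant at $t$ forces fewer than $q_{t}$ first-rankers of $t$ to outrank $j$, and hence $j$'s singleton deviation $(t)$ succeeds in round~$1$ of $\beta^{k}$.

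The strict-inclusion example is also fine, though note that the paper's standing assumption $|I|>|S|$ means a literal two-school example requires at least three students; your ``padded'' construction already accommodates this.
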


\subsection{Limitation in Chicago} 

The other limitation of manipulability is that it is insensitive when applied to constrained versions of strategy-proof mechanisms: it only bites for those profiles where the constraint becomes not binding. Indeed, if at a particular profile a constrained mechanism is not manipulable, then at this profile it must be very close to its unconstrained version. We formalize this intuition for the serial dictatorship mechanism used in Chicago.

\begin{proposition}\label{char_man_SDk}
A preference profile $ P$ is not vulnerable under $SD^k$ if, and only if $SD^k(P)=SD(P)$.
\end{proposition}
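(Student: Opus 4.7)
The plan is to leverage the strategy-proofness of the unconstrained serial dictatorship together with the following structural feature: under $SD$, the school that student $i$ is assigned depends only on the reports of students with higher priority than $i$, through the set $A_i$ of schools still available when $i$'s turn comes.

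For the ``if'' direction, assume $SD^k(P)=SD(P)$ and suppose toward contradiction that some student $i$ gains via a misreport $P_i'$, so that $SD_i((P_i')^k,P_{-i}^k)\mathrel{P_i}SD_i(P^k)=SD_i(P)$. The key intermediate step is to show $SD_i(P_i,P_{-i}^k)=SD_i(P)$. For each higher-priority student $j$, the school $SD_j(P)$ lies in the top $k$ of $P_j$ (otherwise $SD(P^k)\neq SD(P)$), so replacing $P_j$ by $P_j^k$ does not alter $j$'s choice when the same $A_j$ is presented; inducting down the priority order, all higher-priority students pick the same schools, hence $A_i$ is unchanged, and $i$ facing $P_i$ on $A_i$ picks $SD_i(P)$. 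Strategy-proofness of the unconstrained $SD$ applied at the profile $(P_i,P_{-i}^k)$ then yields $SD_i(P_i,P_{-i}^k)\mathrel{R_i}SD_i((P_i')^k,P_{-i}^k)$, contradicting the assumed gain.

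For the ``only if'' direction, assume $SD^k(P)\neq SD(P)$ and let $i^*$ be the highest-priority student at whom the two assignments diverge. For every $j$ of higher priority than $i^*$, $SD_j(P)=SD_j(P^k)$, so the set $A$ of schools available at $i^*$'s turn is identical under $SD(P)$ and $SD(P^k)$. Let $s^*=SD_{i^*}(P)$, the $P_{i^*}$-maximum of $A\cup\{\emptyset\}$. Since the outcomes differ at $i^*$, the school $s^*$ must be ranked below the top $k$ of $P_{i^*}$, and in particular $s^*\mathrel{P_{i^*}}SD^k_{i^*}(P)$. Now let $P_{i^*}'$ declare $s^*$ as the unique acceptable school; higher-priority students' reports are unchanged, so the available set at $i^*$'s turn under $SD^k(P_{i^*}',P_{-i^*})$ is still $A$, and since $s^*\in A$, the mechanism assigns $s^*$ to $i^*$. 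This is the required profitable manipulation, so $P$ is vulnerable under $SD^k$.

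The main obstacle — and the only non-mechanical step — is the intermediate identity $SD_i(P_i,P_{-i}^k)=SD_i(P)$ in the ``if'' direction: one must justify that truncating the preferences of higher-priority students to their top $k$ does not perturb their choices, which follows by a straightforward induction on the priority order using $SD^k(P)=SD(P)$. Everything else is a direct application of strategy-proofness of $SD$ and the sequential nature of the mechanism.
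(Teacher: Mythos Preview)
Your proof is correct and follows essentially the same approach as the paper. Both arguments hinge on the sequential structure of $SD$: for the ``only if'' direction you and the paper both take the highest-priority student $i^*$ at which the two outcomes diverge, observe that the available set $A$ is unchanged, and exhibit the deviation of listing $s^*=SD_{i^*}(P)$; for the ``if'' direction the paper gives a one-line argument (``each student already receives the best remaining school''), while you spell out the same idea more formally via the intermediate identity $SD_i(P_i,P_{-i}^k)=SD_i(P)$ and an appeal to strategy-proofness of unconstrained $SD$---a valid and slightly more explicit packaging of the same reasoning.
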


\begin{proof}
The if part is straightforward: if $SD^k(P)=SD(P)$, i.e. if the constraint is not binding, then at $SD^k(P)$ each student receives the best available school among remaining ones and cannot profitably misreport his preferences.

We prove the only if part by contraposition. Suppose that $ SD^{k}(P)\neq SD(P)$ and consider the highest priority student $i$ for whom $SD^k_i(P)\neq s=SD_i(P)$. Each student with higher priority than $i$ received under $SD^k(P)$ the same school as under $SD(P)$, therefore under $SD^k(P)$ student $i$ had the same choice set of remaining schools as under $SD(P)$. The only way $i$ missed school $s$ under $SD^k(P)$ is if the constraint $k$ was binding for him: each of his top $k$ schools were already assigned, and school $s$ was not listed. However, school $s$ still had available seats, and $i$ could profitably manipulate $SD^k$ at $P$ by listing school $s$ as one of his top $k$ schools.
\end{proof}

For $k=1$ this result also applies to Boston (constrained and unconstrained) with a common priority: at a given preference profile $P$, Boston is not manipulable if and only if its outcome coincides with the outcome of the unconstrained serial dictatorship at $P$.

For realistic profiles, however, the constraint is almost guaranteed to be binding at least for one student, and thus the constrained mechanism remains manipulable at this profile. This occurs, for instance, when the preferences of students are correlated, as often is the case. Next we generalize the Chicago example presented in the introduction. We show that when students have tier preferences, constrained versions of serial dictatorship are always manipulable. In general, the more tiers there are, the more binding the constraint will be.

\begin{example} Serial dictatorship and tier preferences.
\end{example}
Consider a school choice problem with $n$ students and $m$ schools, each two schools $s,s'\in S$ having the same capacity $q_s=q_{s'}$ and a common priority ranking $\succ_s=\succ_{s'}$. 

Assume that students have tier preferences: the set of schools $S$ is partitioned into $t>1$ sets $S_1,S_2,...,S_t$ and each student $i\in I$ prefers each school in $ S_{j}$ from a higher tier $j<t$ over each school in $S_{j+1}$ from a lower tier.\footnote{\cite{coles2013} observed that the academic job market has this structure and referred to it as block correlated preferences.} Assume also that each student finds each school acceptable and that there is shortage of seats, $n>q\times m$.

It is straightforward to show that whenever the number of schools in the upper tiers is at least as large as the constraint, $|S|-|S_t|\geq k$, the constrained serial dictatorship $SD^k$ is always manipulable. Otherwise, if every student reports truthfully, then some students are unmatched while acceptable schools in the last tier $S_t$ are unassigned. By Proposition \ref{char_man_SDk} this is necessary and sufficient for manipulability of $SD^k$. This is why for correlated preferences the manipulability criterion is not sensitive and the constrained serial dictatorship mechanism is as manipulable as the Boston mechanism.

In contrast, our notion remains sensitive in this domain: as the constraint changes, the immunity to strategic admissions changes as well. We formulate this as a proposition.

\begin{proposition}
Let $ k\geq 1$ and $ (\succ,q)$ an environment where schools have a common priority. Let the capacities of the schools be increasingly ordered $ q_{1}\leq q_{2} \leq\hdots \leq  q_{\lvert S\lvert}$ and $ \hat{q}=q_{1}+\hdots + q_{k}$. Then, the mechanism $ SD^{k}$ is strategy-proof for the $\hat{q}$-highest priority students. \label{lemma1}
\end{proposition}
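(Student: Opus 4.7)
The plan is to show that for a top-$\hat{q}$ priority student $i$, the set of schools with remaining capacity when $i$'s turn arrives under $SD^k$ is independent of $i$'s own report, and that this set always contains $i$'s best outcome. Because schools share a common priority under $\succ$, serial dictatorship processes students strictly in the order of $\succ$, so every student with higher priority than $i$ picks before $i$ and does so based only on $(P_{-i})^k$. Hence, for any environment $(\succ,q)$, any $P_{-i}$, and any potential misreport $P'_i$ by $i$, the set $A\subseteq S$ of schools with available capacity at $i$'s turn in the run of $SD^k$ on $(P'_i,P_{-i},\succ,q)$ is the same as in the truthful run. Student $i$ is therefore choosing from a fixed menu $A$.

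Next I would bound how many seats can have been consumed before $i$'s turn. Since $i$ is among the $\hat{q}$-highest priority students, at most $\hat{q}-1$ students have strictly higher priority, and each such student takes at most one seat, so at most $\hat{q}-1$ seats in total have been filled when $i$ is processed. Now consider $i$'s top $k$ acceptable schools under $P_i$ (if $i$ has fewer than $k$ acceptable schools, then $P_i^k=P_i$ and the argument below simplifies). These are $k$ distinct schools in $S$, and since $\hat{q}=q_1+\cdots+q_k$ is the sum of the $k$ \emph{smallest} capacities, the total capacity of $i$'s top $k$ schools is at least $\hat{q}$. Combined with the previous bound, at least one of $i$'s top $k$ schools must still have an available seat, i.e., must belong to $A$.

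Let $a^*$ denote $i$'s $P_i$-most-preferred element of $A\cup\{\emptyset\}$. By the previous step, $a^*$ is $P_i$-at-least-as-good as some school in $i$'s top $k$, so either $a^*=\emptyset$ (and $i$ is necessarily unmatched from $A$) or $a^*$ itself lies in $i$'s top $k$ acceptable schools, hence remains acceptable under the truncation $P_i^k$. Running $SD^k$ on $(P_i,P_{-i},\succ,q)$ therefore assigns $a^*$ to $i$. Under any misreport $P'_i$, student $i$ is assigned some element of $A\cup\{\emptyset\}$, which by definition of $a^*$ is $P_i$-no-better than $a^*$. Thus no manipulation is profitable, so $SD^k$ is strategy-proof for $i$.

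The main obstacle is purely conceptual rather than technical: one has to spot the two key observations that (a) the choice menu $A$ seen by $i$ is independent of $i$'s own report under common-priority SD, and (b) the sum-of-smallest-$k$-capacities quantity $\hat{q}$ is exactly what guarantees that $i$'s top $k$ schools can collectively absorb all earlier picks. Once these are in place, the rest is bookkeeping and a straightforward comparison of reports.
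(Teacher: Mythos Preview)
Your proof is correct and rests on the same counting argument as the paper: any $k$ schools have total capacity at least $\hat{q}=q_1+\cdots+q_k$, so the at most $\hat{q}-1$ higher-priority students cannot fill all of $i$'s top $k$ schools. The paper packages the conclusion slightly differently: it invokes its Lemma~\ref{lemma} (proved earlier for $GS^k$ via strategy-proofness of unconstrained $GS$) to reduce the question to showing that $i$ is always matched when he lists at least $k$ acceptable schools, and then runs the counting argument by contradiction. Your version is more self-contained---you replace the appeal to Lemma~\ref{lemma} by the direct observation that under a common priority the menu $A$ seen at $i$'s turn is independent of $i$'s own report, and then compare outcomes on that fixed menu. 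This is a modest but genuine simplification for the $SD$ case, at the cost of being specific to serial dictatorship rather than reusing the general $GS^k$ lemma.
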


See the appendix for the proof. Proposition \ref{lemma1} is formulated for the entire domain of preferences, and it remains true in the domain of tier preferences. Therefore, in the example, the share of students for whom the admission to each school is strategy-proof is $q\times k/n$.

The schools in the upper tiers are more protected from strategic admissions. By switching from $ \beta^{4}$ to $ SD^{4}$ in 2009, and to $ SD^{6}$ in 2010, the strategy-proof admissions to Chicago elite schools, increased from $ 4\%$ to $ 24\%$ and eventually to $ 100\%$, respectively.

\section{Strategic admissions in equilibrium} 
In this section, we develop a more refined concept of immunity to strategic admissions. Previously, we called admission of student $i$ to school $s$ strategic if there exists a profile and a profitable deviation for $i$ that places $i$ to $s$. But this deviation did not need to be optimal, same as the reports of other students did not need to be optimal, and thus not rationalizable. Now we require the strategies to be mutually optimal.

Let us motivate this with an example.

\begin{example} Equilibrium in Boston. \end{example}

Suppose there are three students $ i$, $ j$ and $ k$ and three schools $ s_{1}$, $ s_{2}$ and $ s_{3}$ with one seat each. The preferences and the priorities are as follows.
\begin{center}
\begin{tabular}{c c c|c c c}
   $ P_{i}$& $ P_{j}$ & $ P_{k}$ & $ \succ_{s_{1}}$ & $ \succ_{s_{2}}$ & $ \succ_{s_{3}}$ \\\hline
   $ s_{1}$ & $ s_{1}$ & $ s_{1}$ & $ j$ & $ j$ & $ j$\\
   $s_{2}$ & $ s_{2}$ & $ s_{2}$ & $k $ & $ k$ & $k $\\
   $s_{3}$ & $ s_{3}$ & $ s_{3}$ & $i $ & $ i$ & $i $\\
   $\emptyset$ & $\emptyset$ & $ \emptyset$ & $ $ & $ $ & $ $\\

\end{tabular}
\end{center}
Let us consider the Boston mechanism. Its outcome for this problem is specified as follows. 
\begin{equation*}
\beta (P,\succ,q)= \begin{pmatrix}
 i &  j &  k \\  s_{3} &  s_{1} &  s_{2}
\end{pmatrix}.
\end{equation*}
Suppose instead that student $ i$ reports the preference relation $ P_{i}^{s_{2}}$ where he ranks school $ s_{2}$ first. If student $ j$ and $ k$ report truthfully as in $ P$, we have
\begin{equation*}
\beta (P^{s_{2}}_{i},P_{-i},\succ,q)= \begin{pmatrix}
 i &  j &  k \\  s_{2} &  s_{1} &  s_{3}
\end{pmatrix}.
\end{equation*} According to the notion developed earlier, the admission to school $ s_{2}$ is not strategy-proof to student $ i$ via the Boston mechanism. However, it is not a best response for student $ k$ to report truthfully $ P_{k}$, when student $ i$ reports $ P^{s_{2}}_{i}$. Student $ i$ has the lowest priority at every school. The lack of strategy-proof admissions for student $ i$ stems from the fact that other students report their preferences truthfully without best-responding. End of example.
\smallskip

This example demonstrates that sometimes the admission to some schools is not strategy-proof to some students only because others are not best responding. This type of admissions may disappear when we require best responses.  To take these best responses into account we introduce an equilibrium concept. If we fix an environment $ (\succ,q)$, any mechanism $ \varphi$ induces a normal form game such that the students are the players, the strategies are the preferences and the outcome function is $\varphi(.,\succ,q)$. Then, a strategy profile $ P'$ is a \textbf{Nash equilibrium} of the game $(I,P,\varphi(.,\succ,q))$ if for each student $ i$, $P'_{i}$ is a best response to $ P'_{-i}$.\footnote{That is, for each student $ i$, there is no strategy $ P''_{i}$ such that $\varphi_{i}(P''_{i},P'_{-i},\succ,q)\mathrel{P_{i}}\varphi_{i}(P',\succ,q).$} When there is no ambiguity on the environment, we denote the game as $ (P,\varphi)$.

\begin{definition}\label{def-str-acc-eq} Let $ \varphi$ be a mechanism and $ (\succ,q)$ an environment. The \textbf{admission to school $ s$ is strategy-proof to student $ i$ via mechanism $ \varphi$ \underline{in equilibrium}} if there is no preference profile $ P$ and a preference relation $ P'_{i}$ such that 
\begin{enumerate}
\item[(1)] $ (P'_{i},P_{-i})$ is a \underline{Nash equilibrium} of the game $ (P,\varphi)$ and
\item[(2)] $ s=\varphi_{i}(P'_{i},P_{-i},\succ,q) \mathrel{P_{i}}\varphi_{i}(P,\succ,q)$
\end{enumerate}
\end{definition}

Note that when the admission to school $ s$ is strategy-proof to student $ i$ via $ \varphi$, there is no preference profile and a deviation that satisfy the condition (2). Therefore the admission to school $ s$ is strategy-proof to student $ i$ via $ \varphi$ in equilibrium. Clearly, Definition \ref{def-str-acc-eq} is more stringent than Definition \ref{def-str-pr-adm}. Let us now use this notion to define a ranking criterion analogous to the one defined in Definition \ref{def-more-immune}.

\begin{definition}
Mechanism $ \varphi$ is \textbf{\underline{strongly} more immune to strategic admissions} than mechanism $ \psi$ if 
\begin{itemize}
\item for each environment $ (\succ,q)$, if the admission to school $ s$ is strategy-proof to student $ i$ via $ \psi$ \underline{in equilibrium}, then the admission to school $ s$ is also strategy-proof to him via $ \varphi$ \underline{in equilibrium} and, \vspace{0.2cm}
\item there is an environment $ (\succ,q)$ where the admission to some school $ s$ is strategy-proof to some student $ i$ via $ \varphi$ \underline{in equilibrium}, but not via $ \psi$.
\end{itemize}
\end{definition}
\vspace{0.5cm}

Despite this stringent notion, the main results for constrained Gale-Shapley mechanism and the First-Preference-First mechanism remain true.

\begin{theorem}\label{Th-GS-FPF-eq}
Let $ k>\ell$ and suppose that there are at least $ k$ schools and at least one first-preference-first school. Then
\begin{itemize}
    \item $ GS^{k}$ is strongly more immune to strategic admissions than $ GS^{\ell}$,
    \item $ GS^{k}$ is strongly more immune to strategic admissions than $ FPF^{k}$.
\end{itemize} \label{thm4}
\end{theorem}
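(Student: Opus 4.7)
The plan is to prove each of the two comparisons in Theorem~\ref{thm4} by its contrapositive, together with an explicit environment for the strictness part. In each case I start from a Nash-equilibrium witness $(P, P'_i)$ for the failure of equilibrium strategy-proof admission of $i$ to $s$ under the ``more immune'' mechanism $GS^k$ and build a corresponding witness under the ``less immune'' mechanism. Compared with the non-equilibrium arguments of Theorems~\ref{pro2}--\ref{pro3}, the additional requirement is that in the new witness, truthful reporting by every student $j \ne i$ must remain a best response to $i$'s deviation, not merely that $i$'s deviation is profitable.

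For $GS^k$ versus $GS^\ell$ I would take the candidate witness $(P, P''_i)$, where $P''_i$ is obtained from $P'_i$ by keeping only its top $\ell$ positions, reordered if necessary so that $s$ appears among them. First, I argue that $i$ still obtains $s$ via $P''_i$ under $GS^\ell$: since $GS^k$ with report $P'_i$ runs the same deferred-acceptance trajectory as $GS^\ell$ with an appropriately chosen $\ell$-long truncation once we restrict attention to the schools $i$ actually ``visits,'' the truncation can be performed without changing $i$'s match. Next, I verify that $P_j$ remains a best response for each $j \ne i$ under $GS^\ell$: any profitable $\ell$-list deviation $Q_j$ against $(P''_i, P_{-i-j})$ is a legal $\le k$-list under $GS^k$, and a coupling of the two DA runs shows that the same $Q_j$ would also be profitable for $j$ against $(P'_i, P_{-i-j})$ under $GS^k$, contradicting the assumed Nash property.

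For $GS^k$ versus $FPF^k$ I would exploit the identity $FPF^k(P, \succ, q) = GS^k(P, \succ'(P), q)$, where $\succ'(P)$ is the profile-dependent priority produced by the first-preference-first adjustment. Starting from a Nash witness $(P, P'_i)$ for $GS^k$ under some priority $\succ^*$, I would backsolve for a priority $\succ$ such that $\succ'(P'_i, P_{-i}) = \succ^*$; this is always possible because the FPF adjustment only re-sorts students within each ``rank class'' of a first-preference-first school according to the original priority. The pair $(P, P'_i)$ then delivers $s$ to $i$ under $FPF^k$ at the new environment, and Nash verification for $j \ne i$ reduces to showing that any alternative report $Q_j$, which would change $\succ'$ only at the first-preference-first schools that $j$ re-ranks, still cannot yield a strictly better school; this follows from the Nash property of $(P, P'_i)$ under $GS^k$ after translating each admissible change in $\succ'$ into an equivalent report change in the $GS^k$ game.

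The main obstacle in both parts is precisely this best-response preservation for $j \ne i$: the ``smaller'' or ``more rank-sensitive'' mechanism processes the same reports differently, so the coupling arguments must carefully track which schools $j$ could end up at and how cascades through the DA run differ. For strictness I would adapt the paper's own examples: a tier-preference environment in the spirit of the Chicago illustration, where Proposition~\ref{lemma1} gives a range of students full equilibrium strategy-proof admission under $GS^k$ but $GS^\ell$ admits a Nash profile in which the constraint binds and one such student profitably manipulates; and the seven-student counterexample of Section~4.1, whose unique profitable deviation (by student $1$) makes the profile trivially Nash under $FPF^k$ while the analogous admission is ruled out in Nash under $GS^k$.
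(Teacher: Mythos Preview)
Your plan departs from the paper's proof in a way that leaves genuine gaps in both parts.

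For $GS^k$ versus $GS^\ell$, you propose to keep the \emph{same} profile $(P''_i,P_{-i})$ and couple the two DA runs. But the constraint acts on \emph{every} student: under $GS^\ell$ the input to DA is $((P''_i)^\ell,(P_{-i})^\ell)$, whereas under $GS^k$ it is $((P'_i)^k,(P_{-i})^k)$. Since $(P_{-i})^\ell\neq(P_{-i})^k$ in general, the two DA trajectories are simply different problems; the claim that $i$ still obtains $s$ under $GS^\ell$, and the claim that a profitable $\ell$-deviation $Q_j$ transfers to a profitable deviation under $GS^k$, both rely on a coupling that does not exist. A concrete obstruction: a student $j$ who is truncated at $\ell$ may free up a seat that $j$ would otherwise hold, letting $i$'s rejection chain terminate differently.

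For $GS^k$ versus $FPF^k$, the backsolving step changes the \emph{environment}: you replace the given $(\succ,q)$ by a new $\succ$ with $\succ'(P'_i,P_{-i})=\succ^*$. But Definition~\ref{def-more-immune} (and its equilibrium version) requires the comparison at each fixed environment; the contrapositive you must prove is that, at the \emph{same} $(\succ,q)$, a Nash witness for $GS^k$ yields a Nash witness for $FPF^k$. Producing a witness at a different priority profile does not establish this. Moreover, the ``translation'' of a $j$-deviation in the $FPF^k$ game into an ``equivalent report change'' in $GS^k$ is not available: $j$'s report in $FPF^k$ alters $j$'s \emph{priority} at first-preference-first schools, an effect $j$ cannot replicate by any report in $GS^k$.

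The paper sidesteps both difficulties by \emph{not} reusing $P$. From the Nash witness under $GS^k$ it extracts the matching $\mu=GS^k(P,\succ,q)$ and builds the simple profile $P^\ast$ of equations~(\ref{eqFP}) and~(\ref{eqGS}), in which every $j\neq i$ lists only $\mu(j)$. Under this $P^\ast$, the outcome of $FPF^k$ and of $GS^\ell$ is immediate, $i$'s deviation $P^s_i$ gives $s$, and the Nash check is essentially trivial because each $j\neq i$ already receives his unique acceptable school (and the one displaced student at $s$ cannot recapture it). Your strictness examples are also off: the seven-student profile of Section~4.1 shows that $GS^3$ \emph{is} manipulable by student~$1$ to obtain $s_4$, so it cannot serve as an instance where admission is strategy-proof in equilibrium under $GS^k$ but not under $FPF^k$; the paper instead uses small common-priority environments built directly from Proposition~\ref{lemma1}.
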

See the appendix for the proof. In contrast, the prior ranking of the Chinese mechanisms does not hold anymore. 

\begin{proposition}\label{Prop-Ch-eq}
There is $ e > e'$ and at least $ e$ schools such that the the mechanism $ Ch^{(e)}$ is not strongly less immune to strategic admissions than $ Ch^{(e')}$.
\end{proposition}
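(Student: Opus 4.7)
The plan is to prove the proposition by exhibiting an explicit counterexample at $e=2$ and $e'=1$, so the two mechanisms being compared are the two-round Chinese parallel mechanism $Ch^{(2)}$ and the Boston mechanism $\beta = Ch^{(1)}$, in an environment with at least three schools of unit capacity. (With only two schools, $Ch^{(2)}$ collapses to unconstrained $GS$, which is strategy-proof; having $|S|\geq 3$ leaves room for non-trivial equilibria in $Ch^{(2)}$.)

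First, I would construct a specific environment $(\succ,q)$ together with a true profile $P$, and identify a student $i$, a school $s$, and a misreport $P'_i$ such that the joint strategy profile $(P'_i,P_{-i})$ is a Nash equilibrium of the preference-revelation game induced by $Ch^{(2)}(\cdot,\succ,q)$ and such that $s = Ch^{(2)}_i(P'_i,P_{-i},\succ,q)\mathrel{P_i}Ch^{(2)}_i(P,\succ,q)$. By Definition \ref{def-str-acc-eq} this certifies that the admission to $s$ is not strategy-proof to $i$ via $Ch^{(2)}$ in equilibrium. The Nash check reduces to a finite case-analysis: for each $j\neq i$ one verifies that no unilateral deviation from $P_j$ yields an outcome strictly preferred under $P_j$, and for $i$ that no deviation from $P'_i$ is profitable under $P_i$.

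Second, in the same environment $(\succ,q)$, I would show that the admission of $i$ to $s$ is strategy-proof to $i$ via $\beta$ in equilibrium. The key reduction is the classical fact that the Nash-equilibrium outcomes of the Boston preference-revelation game (taken with respect to the true profile $P$) coincide with the set of matchings stable under $P$ (Ergin and S\"onmez, 2006). Hence it suffices to show that for every profile $P$ under which some stable matching assigns $s$ to $i$, the truthful Boston outcome satisfies $\beta_i(P,\succ,q)\mathrel{R_i}s$. To engineer this, I would choose $\succ_s$ so that $i$ is its lowest-ranked student: any stable matching giving $s$ to $i$ under any $P$ must then match every higher-priority student to a school strictly preferred to $s$, which, together with the greedy top-down structure of $\beta$, forces $\beta_i(P,\succ,q)\mathrel{R_i}s$ in every such profile.

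The main obstacle is the second step: ruling out Nash-equilibrium admissions to $s$ under $\beta$ requires a universal quantification over all profiles $P$, which I handle by reducing equilibrium outcomes to stable matchings and then exploiting the priority of $i$ at $s$ to pin down the truthful Boston outcome. The first step is a direct but delicate verification of mutual best-responses in the $Ch^{(2)}$ game on a concrete small instance; the care lies in simultaneously making $(P'_i,P_{-i})$ a best-response profile under $Ch^{(2)}$ while ensuring no analogous equilibrium can arise under $\beta$.
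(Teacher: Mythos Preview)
Your first step matches the paper exactly: construct a small unit-capacity environment, exhibit a profile $P^{\ast}$ and deviation $P'_i$ so that $(P'_i,P^{\ast}_{-i})$ is a Nash equilibrium of the $Ch^{(2)}$ game with $s=Ch^{(2)}_i(P'_i,P^{\ast}_{-i})\mathrel{P^{\ast}_i}Ch^{(2)}_i(P^{\ast})$.

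Your second step, however, has a genuine gap. The Ergin--S\"onmez reduction is sound: any Nash equilibrium outcome of the Boston game is stable under the true profile. But the claim you then rely on---``if $i$ has lowest priority at $s$ and some stable matching assigns $s$ to $i$, then $\beta_i(P,\succ,q)\mathrel{R_i}s$''---is false. Here is a four-student, four-school counterexample with unit capacities. Let $\succ_s:j,k,l,i$; $\succ_t:j,k,i,l$; $\succ_u:k,l,j,i$; $\succ_v$ arbitrary. Let $P_i:t,u,s,v$; $P_j:u,t,s,v$; $P_k:t,u,\ldots$; $P_l:u,v,\ldots$. Then $\mu=(i\!\to\! s,\ j\!\to\! t,\ k\!\to\! u,\ l\!\to\! v)$ is stable, yet Boston at $P$ gives $k\!\to\! t$ and $l\!\to\! u$ in round~1, $i$ and $j$ both cascade to $s$ in round~3, $j$ wins, and $\beta_i(P)=v$ with $s\mathrel{P_i}v$. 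So the ``greedy top-down structure'' does not force $\beta_i(P)\mathrel{R_i}s$; your sufficient condition can fail even when $i$ is lowest at $s$.

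The paper avoids this by arguing directly: assuming $s=\beta_i(P'_i,P_{-i})\mathrel{P_i}\beta_i(P)$, it uses Pareto efficiency of $\beta(P)$ to find some $j'\neq i$ who is strictly worse off at $\beta(P'_i,P_{-i})$, and then shows (via a short case split, again using that $i$ is lowest at $s$) that $j'$ has a profitable deviation from $(P'_i,P_{-i})$, so the profile is not a Nash equilibrium. This is strictly weaker than what you try to prove: you attempt to rule out all stable matchings placing $i$ at $s$ unless truth already gives $i$ at least $s$, whereas the paper only rules out equilibria of the specific form $(P'_i,P_{-i})$. Your reduction throws away the extra structure that every $j\neq i$ is playing truthfully and best-responding, and that extra structure is exactly what the paper exploits.
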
 
\begin{proof}
We prove by the following example and the mechanisms $ Ch^{2}$ and $ Ch^{1}=\beta$.

Suppose that there are 4 students and 4 schools. Let $ (P^{\ast},\succ,q)$ be a problem where each school has one seat and $P $ and $ \succ$ are specified as follows.
\begin{center}
\begin{tabular}{c c c c c c|c c c c}
   $ P^{\ast}_{i}$& $ P^{\ast}_{j}$ & $ P^{\ast}_{k}$ & $ P^{\ast}_{m}$ & $ P^{\ast}_{t}$ & $ P'_{i}$ & $ \succ_{s_{1}}$ & $ \succ_{s_{2}}$ & $ \succ_{s_{3}}$ & $ \succ_{s_{4}}$\\\hline
   $ s_{3}$ & $ s_{3}$ & $ s_{2}$ & $ s_{2}$ & $ s_{4}$ & $ s_{1}$ & $ k$ & $ i$ & $ j$ & $ t$ \\
   $ s_{4}$ & $ s_{2}$ & $ s_{1}$ & $ s_{3}$ & $\emptyset$ & $ s_{2}$ & $ m$ & $ k$ & $ \vdots$ & $ \vdots$ \\
   $ s_{1}$ & $\emptyset$ & $ \emptyset$ & $ s_{1}$ & $ $ & $ \emptyset$ & $ j$ & $ \vdots$ & $ $ & $ $ \\
   $ s_{2}$ &  &  & $ \emptyset$ & &  & $ t$ & $ $ & $ $ & $ $ \\
$ \emptyset $ &  &  & $  $ &  & & $ i$ & $ $ & $ $ & $ $ \\

\end{tabular}
\end{center}
Then we have
\begin{equation*}
Ch^{(2)}(P^{\ast},\succ,q)=\begin{pmatrix} i& j& k& m &t \\ \emptyset & s_{3} & s_{2} & s_{1} & s_{4}
\end{pmatrix}.
\end{equation*}
Suppose that student $ i$ reports the preference relation $ P'_{i}$. We show that $ (P'_{i},P^{\ast}_{-i})$ is a Nash equilibrium of the game $ (P^{\ast},Ch^{(2)})$. First, 
\begin{equation}
Ch^{(2)}(P'_{i},P^{\ast}_{-i},\succ,q)=\begin{pmatrix} i& j& k& m &t \\ s_{1} & s_{3} & s_{2} & \emptyset & s_{4}
\end{pmatrix}. \label{equation1}
\end{equation}
In the matching in equation \ref{equation1}, every student, other than $ i$ and $ m$, is matched to hisr most preferred school under $ P^{\ast}$. Thus we need to check that it is a best response for student $ i$ and $ m$. School $ s_{3}$ and $ s_{4}$ are assigned to the highest priority students. Therefore, student $ i$ cannot get a seat at each of these schools by reporting a preference relation other than $ P'_{i}$. Let us consider now student $ m$. In any strategy where he did not include school $ s_{1}$ among the top two acceptable schools, the outcome is the matching in equation \ref{equation1}. Suppose that he uses a strategy $ P'_{m}$ where he includes school $ s_{1}$ among the top two acceptable schools. Then, 
\begin{equation*}
Ch^{(2)}(P'_{i},P'_{m},P^{\ast}_{-\{i,m\}},\succ,q)=\begin{pmatrix} i& j& k& m &t \\ s_{2} & s_{3} & s_{1} & \emptyset & s_{4}
\end{pmatrix},
\end{equation*}
where student $ m$ remains unmatched. Therefore, student $ i$ and $ m$ do not have a profitable deviation and $ (P'_{i},P^{\ast}_{-i})$ is a Nash equilibrium of the game $ (P^{\ast},Ch^{(2)})$. Since
\begin{equation*}
s_{1}=Ch_{i}^{(2)}(P'_{i},P^{\ast}_{-i},\succ,q) \mathrel{P^{\ast}_{i}} Ch^{(2)}_{i}(P^{\ast},\succ,q),
\end{equation*}
then the admission to school $ s_{1}$ is not strategy-proof to student $ i$ via $ Ch^{(2)}$ in equilibrium. 

Next, we show that the admission to school $ s_{1}$ is strategy-proof to student $ i$ via $ Ch^{(1)}=\beta$ in equilibrium. Suppose that for some preference profile $ P$ and $ P'_{i}$, we have 
\begin{equation}
s_{1}=Ch_{i}^{(1)}(P'_{i},P_{-i},\succ,q) \mathrel{P_{i}} Ch^{(1)}_{i}(P,\succ,q).\label{equation2}
\end{equation}
We now show that $(P'_{i},P_{-i}) $ is not a Nash equilibrium of the game $ (P,\beta)$. This will complete the proof showing that student $ i$ does not have a strategic admission to school $ s_{1}$ via $ \beta$. 
The Boston mechanism produces a Pareto optimal matching with respect to reported preferences. Therefore equation \ref{equation2} implies that some student $ j'$ is worse-off under $ \beta(P'_{i},P_{-i},\succ,q)$ compared to $ \beta(P,\succ,q)$. We consider two cases depending on what $j'$ gets:

Case 1: Student $ j'$ is matched to his first choice school under $ \beta(P,\succ,q) $, denoted by $ s$. Then student $ j'$ is the highest priority student among those who ranked school $ s$ first. Since $ j'$ is worse-off, and thus not matched to school $ s$ under $ \beta(P'_{i},P_{-i},\succ,q)$, then student $ i$ has ranked school $ s$ first under $ P'_{i}$ and is matched to it. By equation \ref{equation2}, $ s=s_{1}$ which contradicts the fact that student $ j'$ has higher priority than student $ i$ under $ \succ_{s_{1}}$.

Case 2: Student $ j'$ is not matched to his first choice school under $ \beta(P,\succ,q) $. Let $ s=\beta_{j'}(P,\succ,q)$. Then no student ranked school $ s$ first under $ P$. Let $ P^{s}_{j'}$ be a preference relation where he has ranked school $ s$ first. We claim that $ \beta_{j'}(P'_{i},P^{s}_{j'},P_{-\{i,j'\}},\succ,q)=s$. Suppose, to the contrary, that this is not the case. Then student $ 1$ has ranked school $ s$ first under $ P'_{i}$, and is the only student who has ranked it first under $ (P'_{i},P_{-i})$. Thus $ s=\beta_{i}(P'_{i},P_{-i},\succ,q)=s_{1}$. Since student $ i$ has lower priority than student $ j'$ under $ \succ_{s_{1}}$, $\beta_{j'}(P'_{i},P^{s}_{j'},P_{-\{i,j'\}},\succ,q)=s $, contradicting our assumption that student $ j'$ is not matched to school $ s$. 

Therefore student $ j'$ has a profitable deviation from $(P'_{i},P_{-i}) $, and $ (P'_{i},P_{-i})$ is not a Nash equilibrium of the game $ (P,\beta)$. Under $ (\succ,q)$, the admission to school $ s$ is strategy-proof to student $ i$ via $ Ch^{(1)}$ in equilibrium.
\end{proof}

\vspace{0.7cm}
\section{Conclusion} \label{conclusions}
\vspace{0.7cm}

All strategy-proof mechanisms are alike, each vulnerable mechanism is vulnerable in its own way. Compared to another, a mechanism can be vulnerable at a larger set of profiles as in \cite{pathak2013}, larger set of preference relations as in \cite{decerf2018manipulability}. A mechanism can also be manipulated by a larger set of agents, giving them stronger incentives to manipulate and causing worse consequences for others --- as measured by the range of outcomes that these manipulations induce. This is the focus of our paper.

We compare vulnerable mechanisms by what we call immunity: that is how many agents can manipulate them and to what extent. This notion is first introduced by \cite{som} to study the relation between favoring higher ranks and incentives in school choice. Here we argued that this metric represents the concerns of the policy-makers and the public that accompanied recent reforms in admissions systems around the world; and showed that each of these reforms made the mechanisms more immune. Namely, after each reform, in each environment (set of students, schools, priorities and capacities), for each student there are fewer schools that he can access by profitable manipulation, and each school is more protected from these manipulations.

Immunity also comes in an equilibrium version: when student $i$ manipulates the mechanism to get to a school at preference profile $P$, we require $P$ and $i$'s deviation to form an equilibrium in the game induced by the mechanism. This concept is arguably less realistic for markets where best responses is hard to expect, e.g., when the market is large, but it is a standard refining criterion for smaller problems. Our main results carry over to this equilibrium version of immunity for each mechanism except for the Chinese (Theorem \ref{Th-GS-FPF-eq}, Proposition \ref{Prop-Ch-eq}).

\cite{pathak2013}'s seminal paper was one of the first attempts to compare vulnerable mechanisms and it has generated a literature on other applications and extensions. \cite{ar2016} ranked voting rules by inclusion of the vulnerable preference relations of each agent $i$, that is the relations for which there exist preferences of others such that $i$ can manipulate. This notion was recently used by \cite{decerf2018manipulability} to rank constrained Boston and GS mechanisms. \cite{andersson2014} ranked budget balanced and fair rules by counting, for each preference profile, the number of agents who have profitable manipulations. They find rules that minimize (by inclusion) the number of agents and coalitions that can manipulate. In the same problem \cite{andersson2014least} find the rule that minimizes the maximal gain that an agent can get by manipulation. Next, \cite{chenandkesten2017a} and \cite{dur2018} used manipulability to compare mechanisms used in China and Taiwan, respectively. 

Another related notion is manipulability ranking criterion due to \cite{chen2016}: compared to mechanism A, they define mechanism B to be more manipulable if for each agent, at each profile where he can manipulate and get a particular outcome under mechanism A, he can do the same under mechanism B. This notion is useful for ranking all stable mechanisms in an intuitive sense. However, unlike the results in \cite{pathak2013} and in the current paper, these results cannot be attained using the equilibrium version of their manipulability criterion: in equilibrium each stable mechanism is as manipulable as another. As this notion also relies on the preference by preference and agent by agent comparison, it does not explain any of the reforms studied here (except the Chicago SHS). 

We should note that immunity to manipulation, regardless of the precise metric used to measure it, is not the final criterion in selecting a mechanism. Perhaps, the ultimate concern of the policy-makers and the parents is not the vulnerability itself, but rather the complexity of finding an optimal strategy. This complexity results in drawbacks such as higher number of mismatches, wastes, justified envy and overall dissatisfaction with the system. Surprisingly, the mechanism designers around the world are ready to tolerate certain levels of these drawbacks, but why that ever do it when a strategy-proof mechanism is readily available --- for example, the unconstrained version of the student-proposing deferred acceptance mechanism is strategy-proof \citep{roth1982a,dubins} --- remains obscure. 

\bibliographystyle{apalike}
\bibliography{biblio}
\section*{Appendix: Proofs}

\textit{\underline{Theorem \ref{pro2}}: Suppose that there are at least $ k$ schools where $ k>1$ and at least one first-preference-first school. Then $GS^{k} $ is more immune to strategic admissions than $ FPF^{k}$.}

\begin{proof}[Proof of Theorem \ref{pro2}] We prove the theorem by contraposition. Suppose that there are at least $ k$ schools and at least one first-preference-first school (in each environment).
Let an environment $ (\succ,q)$ be given and suppose that the admission to school $ s$ is not strategy-proof to student $ i$ via $ GS^{k}$. There is a preference profile $ P$ and a preference relation $ P'_{i}$ such that
\begin{equation}
s=GS^{k}_{i}(P'_{i},P_{-i},\succ,q) \mathrel{P_{i}} GS^{k}_{i}(P,\succ,q).\label{eq1}
\end{equation} 

We also show that the admission to school $s$ is not strategy-proof to $i$ via $FPF^{k}$. We are going to prove two facts and draw a lemma. 

\textbf{Fact 1:} $GS^{k}_{i}(P,\succ,q)=\emptyset$

 If $ GS^{k}_{i}(P,\succ,q)=s'\in S$ then school $ s'$ is one of the schools that student $ i$ has ranked among his top $ k$ schools under $ P_{i}$. Since $s \mathrel{P_{i}} GS^{k}_{i}(P,\succ,q)$, we have $ s \mathrel{P_{i}} s'$. Thus school $ s$ is one of the schools that student $ i$ has ranked among his top $k $ schools under $ P_{i}$. Then school $ s$ and school $ s'$ are acceptable under $ P^{k}_{i}$. By definition $ GS^{k}(P'_{i},P_{-i},\succ,q)=GS(P'^{k}_{i},P^{k}_{-i},\succ,q)$ and $ GS^{k}(P,\succ,q)=GS(P^{k},\succ,q)$. Therefore
\begin{equation}
GS_{i}(P'^{k}_{i},P^{k}_{-i},\succ,q)=s \mathrel{P^{k}_{i}} s'= GS_{i}(P^{k},\succ,q).\label{eq2}
\end{equation}
This means, contrary to the fact that $ GS$ is strategy-proof \citep{roth1982a,dubins}, that student $ i$ manipulates $GS $ at $ P^{k}$. Therefore, $ GS^{k}_{i}(P,\succ,q)=\emptyset$.

\textbf{Fact 2:} Student $ i$ did not rank school $ s$ among the top $ k$ schools under $ P_{i}$.

 Otherwise, school $ s$ is acceptable under $ P^{k}_{i}$ and the fact that $ GS_{i}(P^{k},\succ,q)=\emptyset$, by \textbf{Fact 1}, we would have
\begin{equation*}
GS_{i}(P'^{k}_{i},P^{k}_{-i},\succ,q)=s \mathrel{P^{k}_{i}} \emptyset=GS_{i}(P^{k},\succ,q),
\end{equation*}
also contradicting the fact that GS is strategy-proof. Because $ GS^{k}$ is individually rational, equation \ref{eq1} implies that school $ s$ is acceptable to student $ i$ under $ P_{i}$. Now because school $ s$ is acceptable under $ P_{i}$ but not under $ P^{k}_{i}$, student $ i$ has ranked more than $ k$ schools acceptable under $ P$. Let us state these results in a lemma that we use later.
\begin{lemma} Let us suppose that for some preference profile $ P$, a student $ i$ and $ P'_{i}$, we have
\begin{equation*}
s= GS^{k}_{i}(P'_{i},P_{-i},\succ,q) \mathrel{P_{i}} GS^{k}_{i}(P,\succ,q).
\end{equation*}
Then (i) student $ i$ has ranked more than $ k$ schools acceptable under $ P_{i}$, and (ii) school $ s$ is acceptable under $ P_{i}$ but not ranked among the top $ k$ schools under $ P_{i}$. \label{lemma}
\end{lemma}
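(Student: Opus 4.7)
The plan is to reduce both claims to the strategy-proofness of the unconstrained Gale-Shapley mechanism \citep{roth1982a,dubins}, using the identity $GS^k(Q,\succ,q)=GS(Q^k,\succ,q)$ that holds by definition of the constrained version. The guiding observation is that if student $i$ achieves $s$ via a deviation under $GS^k$ but receives something strictly worse when reporting $P_i$, then $i$ should also be able to manipulate unconstrained $GS$ at the truncated profile $P^k$ via the truncated deviation $P'^k_i$ --- provided that both $s$ and $GS^k_i(P,\succ,q)$ survive truncation of $P_i$ in a way that preserves the relevant strict preference. This is precisely what fails if $s$ is ranked below position $k$, which is what (ii) will assert.

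I would first establish the non-trivial half of (ii), that $s$ is not among the top $k$ acceptable schools of $P_i$, by contradiction. Assuming otherwise, $s$ is acceptable under $P^k_i$. A short case split on $t:=GS^k_i(P,\succ,q)$ then shows $s\mathrel{P^k_i}t$: if $t\in S$, individual rationality of the unconstrained $GS$ applied to $P^k$ forces $t$ to be acceptable under $P^k_i$, hence itself among the top $k$ of $P_i$, and truncation preserves the relative order so that $s\mathrel{P_i}t$ transfers to $s\mathrel{P^k_i}t$; if $t=\emptyset$, then $s\mathrel{P^k_i}\emptyset$ follows directly from acceptability of $s$ under $P^k_i$. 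In either case, $P'^k_i$ is a profitable manipulation of unconstrained $GS$ by $i$ at $P^k$, contradicting its strategy-proofness. The remaining half of (ii), namely acceptability of $s$ under $P_i$, is immediate from individual rationality of $GS^k$: chaining $s\mathrel{P_i}GS^k_i(P,\succ,q)\mathrel{R_i}\emptyset$ yields $s\mathrel{P_i}\emptyset$.

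Finally, (i) is a one-line consequence of (ii): $s$ is acceptable under $P_i$ and lies strictly below the $k$th position, so at least $k$ acceptable schools are strictly preferred to $s$, giving more than $k$ acceptable schools in total. The main subtlety I expect is the case $t=\emptyset$ in the contradiction step --- one must verify that $s$ being among the top $k$ of $P_i$ still renders $s$ strictly better than $\emptyset$ under the truncation $P^k_i$, which reduces to the observation that the truncation keeps the top $k$ acceptable choices intact. Once this bookkeeping is in place, the argument is essentially a clean reduction to strategy-proofness of the unconstrained mechanism.
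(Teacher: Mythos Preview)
Your proof is correct and follows essentially the same approach as the paper: both arguments reduce to strategy-proofness of unconstrained $GS$ via the identity $GS^k(Q,\succ,q)=GS(Q^k,\succ,q)$, and both derive (i) as an immediate consequence of (ii). The only organizational difference is that the paper first isolates the conclusion $GS^k_i(P,\succ,q)=\emptyset$ as a standalone ``Fact~1'' (which it reuses later in the proof of Theorem~\ref{pro2}) and then proves (ii) using it, whereas you establish (ii) directly by a case split on $t=GS^k_i(P,\succ,q)$; the logic in your $t\in S$ branch is exactly the paper's Fact~1 argument, and your $t=\emptyset$ branch is exactly its Fact~2 argument.
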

We now prove that the admission to school $ s$ is not strategy-proof to student $ i$ via $ FPF^{k}$. For each $ \ell\leq k$, let $ s_{\ell}$ denote the school that student $ i$ has ranked at position $ \ell$ under $ P_{i}$. Let $ \mu=GS^{k}(P,\succ,q)$. Because $ \mu$ is stable under $ (P^{k},\succ,q)$ and that for each $ \ell\leq k$, $ s_{\ell} \mathrel{P^{k}_{i}} \mu(i)$ due to \textbf{Fact 1} (that is, $\mu(i)=\emptyset$) and Lemma \ref{lemma}, for each $ \ell\leq k$, we have $ \lvert\mu^{-1}(s_{\ell})\lvert =q_{s_{\ell}}$ and for each student $ j\in \mu^{-1}(s_{\ell})$, $ j\mathrel{\succ_{s_{\ell}}} i$. Construct a preference profile $ P^{\ast}$ as follows:

\begin{equation}
\begin{tabular}{c c}
$ P^{\ast}_{i}$ & $ P^{\ast}_{j\neq i}$ \\\hline
$ s_{1}$ & $ \mu(j)$\\
$ s_{2}$ & $ \emptyset$\\
$\vdots$ & \\
$ s_{k}$ &\\
$ s$ & \\
$ \emptyset$ &
\end{tabular} \label{eqFP}
\end{equation}

Each of the students in $ \mu^{-1}(s_{1})$ have higher priority than student $ i$ under $ \succ_{s_{1}}$ and have ranked it first under $ P^{\ast}$. Thus $ FPF^{k}_{i}(P^{\ast},\succ,q)\neq s_{1}$. More generally, for each $ \ell\leq k$, each student in $ \mu^{-1}(s_{\ell})$ has higher priority than student $ i$ under $ \succ_{s_{\ell}}$ and has ranked school $ s_{\ell}$ higher than student $ i$. Therefore $ FPF^{k}(P^{\ast},\succ,q)=\mu$ where student $ i$ is unmatched. Let $ P^{s}_{i}$ be a preference relation where student $ i$ finds only school $ s$ acceptable. We claim that $ FPF^{k}_{i}(P^{s}_{i},P^{\ast}_{-i},\succ)=s$. We consider two cases:
\par Case 1: $ \lvert \mu^{-1}(s) \lvert < q_{s}$. In that case, clearly $ FPF^{k}_{i}(P^{s}_{i},P^{\ast}_{-i},\succ,q)=s$ because no more than $ q_{s}$ students finds school $ s$ acceptable under $ (P^{s}_{i},P^{\ast k}_{-i})$.
\par Case 2: $ \lvert \mu^{-1}(s)\lvert = q_{s}$. In this case, we claim that there is at least one student in $ \mu^{-1}(s)$ who has lower priority than student $ i$ under $ \succ_{s}$. Suppose, to the contrary, that each student in $ \mu^{-1}(s)$ has higher priority than student $ i$ under $ \succ_{s}$.

By definition, $ GS^{k}(P,\succ,q)=GS(P^{k},\succ,q)$. Then $ \mu$ is stable under $ (P^{k},\succ,q)$. It is also stable under $ (P^{s}_{i},P_{-i}^{k},\succ,q)$. This is because, any student other than $ i$ does not have a justified envy and student $ i$'s envy is not justified --- every student in $ \mu^{-1}(s)$ has higher priority than $ i$ under $ \succ_{s}$.

By \cite{roth1986} the set of students who are matched is the same at all stable matchings. Since student $ i$ is unmatched under $ \mu$, we have $ GS_{i}(P^{s}_{i},P^{k}_{-i},\succ,q)=\emptyset$. In addition, using a result by \cite{roth1982a}, $ GS_{i}(P'^{k}_{i},P^{k}_{-i},\succ,q)=s$ implies that $ GS_{i}(P^{s}_{i},P^{k}_{-i},\succ,q)=s$. This contradicts the conclusion that $ GS_{i}(P^{s}_{i},P^{k}_{-i},\succ,q)=\emptyset$. Therefore there is at least one student in $ \mu^{-1}(s)$ who has lower priority than student $ i$ under $ \succ_{s}$.

Thus $ FPF^{k}_{i}(P^{s}_{i},P^{\ast}_{-i},\succ,q)=s$. Finally, because $ FPF^{k}_{i}(P^{\ast},\succ,q)=\emptyset$ and school $ s$ is acceptable under $ P^{\ast}_{i}$ by construction,
\begin{equation*}
s=FPF^{k}_{i}(P^{s}_{i},P^{\ast}_{-i},\succ,q) \mathrel{P^{\ast}_{i}} FPF_{i}^{k}(P^{\ast},\succ,q),
\end{equation*}
\noindent proving that the admission to school $s$ is not strategy-proof to student $ i$ via $ FPF^{k}$. Finally, we construct an environment where the admission to some school is strategy-proof to some student via $ GS^{k}$ but not via $ FPF^{k}$.

Let $ (\succ,q)$ be an environment such that schools have the same priority order and have one seat each. Since there is at least one first-preference-first school, let school $ s$ be one such a school. Let student $ i$ be the student who is ranked first, $ j$ second and $ m$ third, in the common priroity. Since $ k\geq 2$, the admission to school $ s$ is strategy-proof to student $ j$ via $ GS^{k}$. Let us now show that the admission to school $ s$ is not strategy-proof to student $ j$ via $ FPF^{k}$. This follows from Lemma \ref{lemma} and because he is always matched to one of his top two schools when he ranks at least two schools acceptable. Let $ P$ be a preference profile such that the components for $ i$, $ j$ and $ m $ are specified as below.   

\begin{center}
\begin{tabular}{c c c}
$ P_{i}$ & $ P_{j}$ & $ P_{m}$\\\hline
$ s'$ & $ s'$ & $ s$\\
$ s$ & $ s$ & $ s'$\\
$ \emptyset$ & $ \emptyset$ & $ \emptyset$
\end{tabular}
\end{center}
 Then $ FPF^{k}_{j}(P,\succ,q)=\emptyset$ because school $ s$ is a first-preference-first school for which student $ j$ did not rank as high as student $m$. Let $ P^{s}_{i}$ be a preference relation where $ s$ is the only acceptable school for student $ j$. Then $ FPF^{k}_{j}(P^{s}_{j},P_{-j},\succ,q)=s$. Therefore the admission to school $ s$ is not strategy-proof to student $ j$ via $ FPF^{k}$. 
\end{proof}

\textit{\underline{Theorem \ref{pro3}}: Let $ k>\ell$ and suppose that there are at least $k$ schools. Then $ GS^{k}$ is more immune to strategic admissions than $ GS^{\ell}$.}

\begin{proof}[Proof of Theorem \ref{pro3}] We prove the theorem by contraposition. Let $ k> \ell$ and suppose that there is at least $k$ schools. We show that $ GS^{k}$ is more immune to strategic admissions than $ GS^{\ell}$.
Fix an environment $ (\succ,q)$ and suppose that the admission to school $ s$ is not strategy-proof to student $ i$ via $ GS^{k}$. Then there is a preference profile $ P$ and a preference relation $ P'_{i}$ such that
\begin{equation*}
s=GS^{k}_{i}(P'_{i},P_{-i},\succ,q) \mathrel{P_{i}} GS^{k}_{i}(P,\succ,q).
\end{equation*}
 We show that the admission to school $ s$ is not strategy-proof to student $i$ via $GS^{\ell}$.

By Lemma \ref{lemma}, student $ i$ has ranked more than $ k$ schools acceptable under $ P_{i}$ and school $ s$ is acceptable under $ P_{i}$ but not ranked among the top $ k$ schools. For each $ \ell'\leq k$, let $s_{\ell'} $ denote the school that student $ i$ has ranked at position $ \ell'$ under $ P_{i}$.

By definition $ GS^{k}(P,\succ,q)=GS(P^{k},\succ,q)$. Let $ \mu=GS(P^{k},\succ,q)$. Because $ \mu$ is stable under $ (P^{k},\succ,q)$ and student $ i$ is unmatched under $ \mu$ by \textbf{Fact 1}, for each $ \ell'\leq k$, $ s_{\ell'} \mathrel{P_{i}^{k}} \mu(i)$ implies that for each student $ j\in \mu^{-1}(s_{\ell'})$, $ j\mathrel{\succ_{s_{\ell'}}} i$. Let $ P^{\ast}$ be a preference profile defined as follows:

\begin{equation}
\begin{tabular}{c c}
$ P^{\ast}_{i}$ & $ P^{\ast}_{j\neq i}$ \\\hline
$ s_{1}$ & $ \mu(j)$\\
$ s_{2}$ & $ \emptyset$\\
$\vdots$ & \\
$ s_{\ell}$ &\\
$ s$ &\\
$ \emptyset$ &
\end{tabular}\label{eqGS}
\end{equation}

Then $ GS^{\ell}(P^{\ast},\succ,q)=\mu$, where student $ i$ is unmatched. Let $ P^{s}_{i}$ be a preference relation where $ s$ is the only acceptable school for student $ i$. If $ \lvert \mu^{-1}(s)\lvert < q_{s}$, then $ GS^{\ell}_{i}(P^{s}_{i},P^{\ast}_{-i},\succ,q)=s$. If $ \lvert \mu^{-1}(s) \lvert =q_{s}$, then by Case 2 above, student $ i$ has higher priority than some student in $\mu^{-1}(s) $. Therefore, $ GS^{\ell}_{i}(P^{s}_{i},P^{\ast}_{-i},\succ,q)=s$. Therefore the admission to school $ s$ is not strategy-proof to student $ i$ via $ GS^{\ell}$.
\par
We provide an environment where the admission to some school is strategy-proof to a student via $ GS^{\ell}$ but not via $ GS^{k}$. Fix an environment $ (\succ,q)$ where schools have the same priority and where each has one seat. Then $ GS^{\ell}$ is strategy-proof to the top $ \ell$ students under the common priority but not to the $ (\ell+1)$'s priority student. Since $ k\geq \ell+1$, then $ GS^{k}$ is strategy-proof to the $ (\ell+1)$'s priority student.

\end{proof}

\begin{proof}[Proof of Corollary \ref{cor1}]
Fix an environment $ (\succ,q)$ and suppose that the admission to school $ s$ is not strategy-proof to student $ i$ via $ GS^{k}$. By Theorem \ref{pro3} the admission to school $ s$ is also not strategy-proof to student $ i$ via $ GS^{\ell}$. By Theorem \ref{pro2} the admission to school $ s$ is also not strategy-proof to student $ i$ via $ FPF^{\ell}$. \par
It remains to provide an environment where the converse is not true. We provided such an environment in the proof of Theorem \ref{pro2}: there the admission to some school $ s$ is strategy-proof to student $ i$ via $ GS^{\ell}$ but not via $ FPF^{\ell}$. By Theorem \ref{pro3} the admission to school $ s$ is strategy-proof to student $ i$ via $ GS^{k}$.
\end{proof}

\textit{\underline{Theorem \ref{Chinese}}: Let $ e > e'$ and suppose that there is at least $ e$ schools. Then $ Ch^{(e)}$ is more immune to strategic admissions than $ Ch^{(e')}$.}

\begin{proof}[Proof of Theorem \ref{Chinese}] 
First, we collect some basic results that are proven in \cite{chenandkesten2017a} needed to prove Theorem \ref{Chinese}.
\begin{lemma}\citep{chenandkesten2017a}.\label{Chinese-lemma} Let $ e$ be given. Let $ P$ be a preference profile, $ i$ a student, $ s'$ a school and $ P_{i}^{s'}$ a preference relation where $ i$ has ranked school $ s'$ first.
\begin{itemize}
\item[(i)] Suppose that student $ i$ is matched to school $ s'$ under $ Ch^{(e)}(P,\succ,q)$. Then he is also matched to school $ s'$ under $ Ch^{(e)}(P^{s'}_{i},P_{-i},\succ,q)$.
\item[(ii)] Suppose that student $ i$ prefers school $ s'$ to his matching under $Ch^{(e)}(P,\succ,q) $ and has ranked it among his top $ e$ schools under $ P$. Then he cannot obtain a seat at school $ s'$ by misrepresenting his preferences.
\end{itemize}
\end{lemma}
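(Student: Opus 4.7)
The plan is to exploit the round-by-round structure of $Ch^{(e)}$: in round $r$, the unmatched students and the schools with residual capacity engage in a single-shot student-proposing Gale--Shapley procedure on each student's ``window'' of preferences at positions $(r-1)e+1$ through $re$, and each round's matches are irrevocable. Part (i) will be proved by induction on the round in which $i$ is matched to $s'$, and part (ii) by contradiction, using strategy-proofness of the round-1 Gale--Shapley together with part (i) as a reduction tool.

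For (i), fix $r$ as the round in which $Ch^{(e)}(P,\succ,q)$ assigns $s'$ to $i$. If $r=1$, then $\mu_1 := GS(P^e)$ matches $i$ to $s'$; under the deviation, round 1 runs $GS((P_i^{s'})^e, P_{-i}^e)$, in which $s'$ is $i$'s top choice. A direct stability check shows $\mu_1$ is still stable under the new profile (only $i$'s ranking changed, and since $i$ already holds his top choice $s'$, no new blocking pair can involve $i$), whence student-optimality of $GS$ assigns $i$ exactly $s'$. If $r \geq 2$, then $i$ is unmatched in $\mu_1$ and $s'$ is not in $i$'s top $e$ under $P$ (else stability would force $i$ into $s'$ in round 1, since $s'$ retains capacity for a later round); the modified matching $\nu^*$ obtained from $\mu_1$ by placing $i$ at $s'$ is feasible (as $|\mu_1^{-1}(s')| < q_{s'}$) and stable under $((P_i^{s'})^e, P_{-i}^e)$---any purported new blocker $(j, s')$ with $j \neq i$ would already have been a blocker for $\mu_1$ under $P^e$---so student-optimality again yields $i$ at $s'$ in round 1 of the deviation.

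For (ii), suppose for contradiction that some deviation $P_i'$ gives $Ch^{(e)}_i(P_i', P_{-i}) = s'$, matched in round $r' \geq 1$ under the deviation, and write $\mu_1' := GS((P_i')^e, P_{-i}^e)$. If $r' = 1$, then $\mu_1'(i) = s'$; strategy-proofness of unconstrained $GS$ on the round-1 profile gives that $GS(P^e)_i$ is, under $P_i^e$, at least as good as $s'$. Since $s'$ is acceptable under $P_i^e$ by hypothesis, this rules out $GS(P^e)_i = \emptyset$ and forces $Ch^{(e)}_i(P)$---which equals $GS(P^e)_i$ whenever $i$ is matched in round 1---to be at least as good as $s'$, contradicting $s' \mathrel{P_i} Ch^{(e)}_i(P)$. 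If $r' \geq 2$, then $\mu_1'(i) = \emptyset$ and $s'$ has residual capacity $|\mu_1'^{-1}(s')| < q_{s'}$ entering round 2 under the deviation; yet stability of $\mu_1 = GS(P^e)$ forces $s'$ to be filled in $\mu_1$ by exactly $q_{s'}$ students each of strictly higher priority than $i$ at $s'$, because $s'$ lies in $i$'s top $e$ under $P$ and $i$ does not obtain $s'$. To derive the contradiction, I would apply part (i) to the profile $(P_i', P_{-i})$ together with the specific deviation $Q_i$ declaring $s'$ the only acceptable school---so that $i$ applies only to $s'$ in every round---and, by (i), $i$ obtains $s'$ under $(Q_i, P_{-i})$. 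Comparing the round-1 executions under $(Q_i, P_{-i})$ and under $P$, and using that an eventually-rejected application in student-proposing $GS$ never displaces any accepted match, one shows that the two round-1 matchings agree on $I \setminus \{i\}$ up to the vacancy chain triggered by $i$'s removal from its round-1 match under $P$; since any vacancy opening at $s'$ along the chain is filled by a student whose priority at $s'$ strictly exceeds $i$'s, school $s'$ remains saturated throughout the execution of $(Q_i, P_{-i})$ and $i$ never obtains it, contradicting (i).

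The main obstacle is the $r' \geq 2$ branch of (ii), particularly the sub-case in which $i$ is matched in round 1 of $P$ to a school $t \neq s'$ (so that removing $i$'s application to $t$ under $Q_i$ triggers a nontrivial vacancy chain). The essential claim is that this chain cannot propagate up to $s'$ in a way that benefits $i$: each student at $s'$ in $\mu_1$ already has $s'$ as their best stable match under $P^e$, and the students that might displace them at a higher-ranked school along the chain all strictly outrank $i$ in priority wherever they move; so any vacancy that opens at $s'$ is filled by a student strictly above $i$ in priority at $s'$ before $i$'s turn. Formalizing this monotonicity of the chain is the technical heart of the proof.
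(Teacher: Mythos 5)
First, a point of reference: the paper never proves this lemma --- it is stated with a citation, and the appendix explicitly introduces it as one of the ``basic results that are proven in \cite{chenandkesten2017a}.'' So there is no in-paper proof to compare against; what follows assesses your argument on its own terms.

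Your part (i) is correct and complete. In both cases you exhibit a matching that is stable under the deviation profile ($\mu_1$ itself when $r=1$; $\mu_1$ with $i$ inserted at $s'$ when $r\geq 2$, feasibility coming from the residual capacity of $s'$), and student-optimality of $GS$ then forces $i$ to receive his top-ranked $s'$ in round 1 of the deviation; since round-1 assignments are final, the conclusion follows. The $r'=1$ branch of part (ii) is also correct: it is exactly strategy-proofness of unconstrained $GS$ at the round-1 profile $(P_i^{e},P_{-i}^{e})$, and it is there that the hypothesis that $s'$ lies among $i$'s top $e$ schools does its work, by making $s'$ acceptable under $P_i^{e}$.

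The genuine gap is the $r'\geq 2$ branch of (ii). After correctly reducing, via part (i) applied with $Q_i$ (the relation listing only $s'$), to the claim that $i$ cannot obtain $s'$ under $(Q_i,P_{-i})$, you propose to prove that claim by tracking the vacancy chain triggered by $i$'s withdrawal from his round-1 assignment, resting on the assertion that any vacancy opening at $s'$ along the chain is filled by a student outranking $i$ at $s'$. That assertion is the entire content of the claim, it is not proven, and you yourself flag it as the unformalized ``technical heart''; note that it does not follow from the fact that the students \emph{matched} to $s'$ under $\mu_1$ outrank $i$, since students \emph{rejected} from $s'$ in the original execution may have lower priority at $s'$ than $i$, so one must still rule out that the reshuffling ever makes $i$'s application the winning one. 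As written, the proof is therefore incomplete. The irony is that no chain analysis is needed: your own tools close the case in two steps. Since $Q_i$ lists only $s'$, if $i$ were unmatched in round 1 of $(Q_i,P_{-i})$, then non-wastefulness of $GS(Q_i,P_{-i}^{e},\succ,q)$ would force $s'$ to be filled in round 1, hence to have zero residual capacity in every later round, so $i$ would never be matched at all --- contradicting what part (i) just gave you. Hence $GS_i(Q_i,P_{-i}^{e},\succ,q)=s'$. But that is a round-1 manipulation of unconstrained $GS$ at $(P^{e},\succ,q)$ yielding $s'$, and $s'\mathrel{P_i^{e}}GS_i(P^{e},\succ,q)$ holds in both subcases (if $i$ is matched in round 1 under $P$, this is the hypothesis $s'\mathrel{P_i}Ch_i^{(e)}(P,\succ,q)$ restricted to top-$e$ schools; if not, $GS_i(P^{e},\succ,q)=\emptyset$ while $s'$ is acceptable under $P_i^{e}$). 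This contradicts strategy-proofness of $GS$ \citep{roth1982a,dubins} --- in other words, the later-round case collapses to your own $r'=1$ argument applied to the report $Q_i$, and the vacancy-chain machinery can be deleted.
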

 We prove the theorem by contraposition. Fix an environment $ (\succ,q)$ and suppose that the admission to school $ s$ is not strategy-proof to student $ i$ via $ Ch^{(e)}$. Then there is a preference profile $ P$ and a preference relation $ P'_{i}$ such that
\begin{equation}
s=Ch^{(e)}_{i}(P'_{i},P_{-i},\succ,q)\mathrel{P_{i}} Ch^{(e)}_{i}(P,\succ,q). \label{ch10}
\end{equation}
Let $ P_{i}^{s}$ be a preference relation where student $ i$ has ranked school $ s$ first. By Lemma \ref{Chinese-lemma} (i), $ Ch^{(e)}_{i}(P^{s}_{i},P_{-i},\succ,q)=s$. Then student $ i$ is matched in the first round of the mechanism. Thus,
\begin{equation}
GS_{i}(P^{s}_{i},P_{-i}^{e},\succ,q)=s. \label{ch}
\end{equation}
 Since $ s \mathrel{P_{i}} Ch^{(e)}_{i}(P,\succ,q)$, then student $ i$ has been rejected by school $ s$ at some round. Hence all the seats of school $ s$ have been assigned under $ \mu=Ch^{(e)}(P,\succ,q)$. That is $ \lvert \mu^{-1}(s)\lvert =q_{s}$. Let $ N=\mu^{-1}(s)$ denote the set of students who are matched to school $ s$ under $ Ch^{(e)}(P,\succ,q)$. 
 
 We now prove that the admission to school $s $ is not strategy-proof to student $ i$ via $ Ch^{(e')}$. By Lemma \ref{Chinese-lemma} (ii), student $ i$ did not rank school $ s$ among his top $ e$ schools under $P $. By equation \ref{ch10}, if $ \mu(i)$ is a school, then it is ranked lower that the position $ e$ under $ P_{i}$. We consider two cases:
\par \textbf{Case 1:} At least one student in $ N$ has lower priority than student $ i$ under $ \succ_{s}$. Since $ e'< e$, student $ i$ has ranked more than $ e'$ schools above school $ s$ under $ P_{i}$. For each $ \ell=1,\hdots, e'$, let $ s_{\ell}$ be the $ \ell'$th ranked school under $ P_{i}$. Let $ P^{\ast}$ denote the following preference profile:

\begin{center}
\begin{tabular}{c c}
$ P^{\ast}_{i}$ & $ P^{\ast}_{j\neq i}$\\\hline
$ s_{1}$ & $ \mu (j)$\\
$ \vdots$ & $ \emptyset$\\
$ s_{e'}$ & \\
$ s$ &\\
$ \mu(i)$\\
$ \emptyset$
\end{tabular}
\end{center}

Note that student $ i$ is not matched in the first round $ Ch^{(e)}$. Thus $ GS_{i}(P^{e},\succ,q)=\emptyset$. Then for each $ \ell=1,\hdots,e'$, each student matched to school $ s_{\ell}$ in $ \mu$ has higher priority than student $ i$ under $ \succ_{s_{\ell}}$.\footnote{This is because for each $ \ell=1,\hdots,e'$ and each student $ j$ such that $ \mu(j)=s_{\ell}$, $ s_{\ell}=GS_{j}(P^{e},\succ,q)$ and $ s_{\ell}\mathrel{P^{e}_{i}\mu(i)}$.} Furthermore, under $ P^{\ast}$, each student in $ N$ has ranked school $ s$ first and student $ i$ did not rank it among the top $ e'$. Therefore,

\begin{equation*}
     Ch^{(e')}(P,\succ,q)=\mu. 
\end{equation*}

Since at least one student in $ N$ has lower priroity than student $ i$ under $ \succ_{s}$ we have,

\begin{equation*}
     Ch^{(e')}(P^{s}_{i},P^{\ast}_{-i},\succ,q)=s,
\end{equation*}

\noindent proving that the admission to school $ s$ is not strategy-proof to student $ i$ via $ Ch^{(e')}$.
\par \textbf{Case 2:} Every student in $ N$ has higher priority than student $ i$ under $ \succ_{s}$. We claim that \\ 

\textit{Claim:} at least one student in $ N$ has ranked school $ s$ below position $ e$ under $ P$.

\begin{proof}[Proof of the claim]
Suppose, to the contrary, that every student in $ N$ has ranked school $ s$ among the top $ e$ schools under $ P$. Let $ \eta=GS(P^{e},\succ,q)$. Student $ i$ is not matched to his top $ e$ schools under $ Ch^{(e)}(P,\succ,q)$. Therefore, $ \eta(i)=\emptyset$. Every student in $ N$ is matched to school $ s$ under $ \mu$ and has ranked it among the top $ e$ schools under $ P$. Then for each $ j\in N$, $ \eta(j)=s$. Because every student in $ N$ has higher priority than student $ i$ under $ \succ_{s}$, $ \eta$ is also stable under $ (P^{s}_{i},P^{e}_{-i},\succ,q)$. By Roth (1984) the set of matched students is the same at all stable matchings. Hence $ GS_{i}(P^{s}_{i},P^{e}_{-i},\succ,q)=\emptyset$. This contracts equation \ref{ch}.
\end{proof}
Since there is at least one student in $ N$ who ranked school $ s$ below position $ e$ under $ P$ and that $ e'< e$, there is at least one student in $ N$ who ranked school $ s$ below position $ e'$ under $ P$. Let $ j$ be one such student. For each $ \ell=1,\hdots,e'$, let $ s^{i}_{\ell}$ and $ s^{j}_{\ell}$ denote the $ \ell'$th ranked school of student $ i$ and $ j$, respectively, under $ P$. Let $ P^{\ast}$ be the following profile.
\begin{center}
\begin{tabular}{c c c}
$ P^{\ast}_{i}$ & $ P^{\ast}_{j}$ & $ P^{\ast}_{k\neq i,j}$\\\hline
$ s_{1}^{i}$ & $ s_{1}^{j}$ & $ \mu (k)$ \\
$ \vdots$ & $ \vdots$ & $ \emptyset$\\
$ s_{e'}^{i}$ & $s_{e'}^{j} $ & \\
$ s$ & $s $ & \\
$ \mu(i)$ & $ \emptyset$ &\\
$ \emptyset$ & $ $ &
\end{tabular}
\end{center}
All seats of each of the schools $ s^{j}_{1}, \hdots, s^{j}_{e'}$ are assigned in the first round of $ Ch^{(e)}(P,\succ,q)$. Since student $ i$ is matched (if any) in a round later than the first round of $ Ch^{(e)}(P,\succ,q)$, $ \mu(i) $ is not one of the schools $ s^{j}_{1}\hdots, s^{j}_{e'}$. Let $\ell=1\hdots,e' $. Because student $ j$ has ranked school $ s^{j}_{\ell}$ among the top $ e$ schools under $ P$ and has been rejected, all its seats have been assigned at the first round of $ Ch^{(e)}(P,\succ,q)$ to students who have higher priority than him under $ \succ_{s_{\ell}^{j}}$. Thus each student in $\mu^{-1}( s^{j}_{\ell})$ has higher priority than student $ j$ under $ \succ_{s_{\ell}^{j}}$. Then $ Ch^{(e')}(P^{\ast},\succ,q)$ is completed in two rounds, and because student $ j$ has higher priority than student $ i$ under $ \succ_{s}$, we have
\begin{equation*}
Ch^{(e')}(P^{\ast},\succ,q)=\mu.
\end{equation*}
Now under $(P^{s}_{i},P^{\ast}_{-i})$, there is $q_{s}$ students (including student $ i$) who have ranked school $ s$ among the top $ e'$ schools. Therefore $ Ch^{(e')}_{i}(P^{s}_{i},P^{\ast}_{-i},\succ,q)=s$ and 

\begin{equation*}
    s= Ch^{(e')}_{i}(P^{s}_{i},P^{\ast}_{-i},\succ,q) \mathrel{P^{\ast}_{i}} Ch^{(e')}_{i}(P^{\ast},\succ,q)=\mu(i),
\end{equation*}

\noindent proving that the admission to school $ s$ is not strategy-proof to student $ i$ via $ Ch^{(e')}$.

\end{proof}

\textit{\underline{Proposition \ref{lemma1}}: Let $ k\geq 1$ and $ (\succ,q)$ an environment where schools have a common priority. Let the capacities of the schools be increasingly ordered $ q_{1}\leq q_{2} \leq\hdots \leq  q_{\lvert S\lvert}$ and $ \hat{q}=q_{1}+\hdots + q_{k}$. Then, the mechanism $ SD^{k}$ is strategy-proof for the $\hat{q}$-highest priority students.}

\begin{proof}[Proof of Proposition \ref{lemma1}] The mechanism $ SD$ is strategy-proof. Let $ P$ be a preference profile and suppose that student $ i$ is matched under $ SD^{k}(P,\succ,q)$ or has ranked less than or $ k$ acceptable schools. By Lemma \ref{lemma}, student $ i$ cannot manipulate $ SD^{k}$ at $ P$. Let $ i$ be one of the $ \hat{q}$-highest priority students. We show that he never misses one of his $ k$ most preferred schools whenever he ranks at least $ k$ acceptable schools. This will complete the proof. 

Suppose, to the contrary, that student $ i$ ranks at least $ k$ acceptable schools and ends up unmatched under $ SD^{k}(P,\succ,q)$. Let $ S'$ denote the set of his $ k$ most preferred schools. Then, at his turn, all the seats of the schools in $ S'$ have been selected. Then at least $ q'=\sum_{s\in S'}q_{s}$ students have moved before student $ i$. Then student $ i$ is not one of the $ q'$-highest priority students. This contradicts the fact that student $ i$ is one of the $ \hat{q}$-highest priority students because $ \hat{q}\leq q'$.

\end{proof}

\textit{\underline{Theorem \ref{thm4}}: Let $ k>\ell$ and suppose that there are at least $ k$ schools and at least one first-preference-first school. Then
\begin{itemize}
    \item $ GS^{k}$ is strongly more immune to strategic admissions than $ GS^{\ell}$,
    \item $ GS^{k}$ is strongly more immune to strategic admissions than $ FPF^{k}$.
\end{itemize}}

\begin{proof}[Proof of Theorem \ref{thm4}] We prove the theorem by contraposition. Suppose that there are at least $ k$ schools and at least one first-preference-first school. Let an environment $ (\succ,q)$ be given and suppose that the admission to school $ s$ is not strategy-proof to student $ i$ via $ GS^{k}$ in equilibrium. There is a preference profile $ P$ and a preference relation $ P'_{i}$ such that
\begin{itemize}
    \item $ (P'_{i},P_{-i})$ is a Nash equilibrium of the $ (P,GS^{k})$ and 
    \item  $s=GS^{k}_{i}(P'_{i},P_{-i},\succ,q) \mathrel{P_{i}} GS^{k}_{i}(P,\succ,q)$.

\end{itemize}
 We show that the admission to school $ s$ is not strategy-proof to student $ i$ via $ FPF^{k}$ and $GS^{\ell} $ in equilibrium. The difference with the proof of Theorem \ref{pro2} and Theorem \ref{pro3} is that we further assumed that $ (P'_{i},P_{-i})$ is a Nash equilibrium of the $ (P,GS^{k})$. For each of the preference profiles that we constructed in equations \ref{eqFP} and \ref{eqGS}, we have
 \begin{equation*}
    s=GS^{k}_{i}(P^{s}_{i},P^{\ast}_{-i},\succ,q) \mathrel{P^{\ast}_{i}} GS^{k}_{i}(P^{\ast},\succ,q),
 \end{equation*}
 where student $ i$ has ranked school $ s$ first under $ P^{s}_{i}$. Note that there is a student in $\mu^{-1}(s)$ who has lower priority than student $ i$ under $ \succ_{s}$. Let $ j$ be the lowest priority students among them. Now, under $ FPF^{k}(P^{s}_{i},P^{\ast},\succ,q)$ student $ j$ is unmatched, student $ i$ is matched to school $ s$ and each of the remaining students is matched to their first choice school. The strategy $ (P^{s}_{i},P^{\ast}_{-i})$ is a Nash equilibrium of $ (P,FPF^{k})$. Indeed, student $ i$ cannot get a seat at a school $ s'$ that he prefers to $ s$ because each student in $ \mu^{-1}(s')$ has ranked $ s'$ first and has higher priority than him under $ \succ_{s'}$. Similarly, each student matched to school $ s$ under $ FPF^{k}(P^{s}_{i},P^{\ast}_{-i},\succ,q)$ has higher priority than student $ j$ under $ \succ_{s} $ and has ranked it first under $ (P^{s}_{i},P^{\ast}_{-i})$. Thus student $ j$ cannot be matched to school $ s$ by reporting a preference relation other than $ P^{\ast}_{j}$. The prove that the admission to school $ s$ is not strategy-proof to student $ i$ via $ FPF^{k}$ in equilibrium. 
 
 The argument can be used to prove that $ (P^{s}_{i},P^{\ast}_{-i})$ is a Nash equilibrium of the game $ (P,GS^{\ell})$ where $ P^{\ast}$ is the preference profile in equation \ref{eqGS}.

We provide an environment where the admission to some school is strategy-proof to some student via $ GS^{k}$ in equilibrium but not via $ FPF^{k}$. Let $ (\succ,q)$ be an environment where the schools have a common priority and where each school has one seat. By assumption there are at least $ k\geq 2$ schools and students. Let students be ordered from $ 1$, the highest priority student, to $ \lvert I\lvert$, the lowest priority student. By Proposition \ref{lemma1}, $ GS^{k}=SD^{k}$ is strategy-proof for student $ 2$. Without loss of generality, let us assume that school $ s_{2}$ is a first-preference-first school. Let $ P$ be the following preference profile:

\begin{center}
    \begin{tabular}{c c c c}
        $ P_{1}$ & $ P_{2}$ & $ P_{3}$ & $ P_{-\{1,2,3\}}$\\\hline
        $ s_{1}$ & $ s_{1}$ & $ s_{2}$ & $\emptyset$\\
        $ \emptyset$ & $ s_{2}$ & $ \emptyset$ & \\
         $ $ & $ \emptyset$ & $ $ & 
        
    \end{tabular}
\end{center}

Since $ k\geq 2$, $ FPF_{2}^{k}(P,\succ,q)=\emptyset$. Let $ P^{s_{2}}_{2}$ be a preference relation where student $ 2$ ranks school $ s_{2}$ first. Clearly, $ (P^{s_{2}}_{2},P_{-2})$ is a Nash equilibrium of the game $ (P,FPF^{k})$ and 
\begin{equation*}
    s_{2}=FPF^{k}_{2}(P^{s_{2}}_{2},P_{-2},\succ,q) \mathrel{P_{2}} FPF^{k}_{2}(P,\succ,q).
\end{equation*}
Therefore, the admission to school $ s_{2}$ is not strategy-proof to student $ 2$ via $ FPF^{k}$ in equilibrium. 

We consider the same environment to show that the admission to some school is strategy-proof to some student via $ GS^{k}$ in equilibrium but not via $ GS^{\ell}$. Since $ k> \ell$, $ k\geq \ell+1$ and by the Proposition \ref{lemma1}, $ GS^{k}=SD^{k}$ is strategy-proof to the student $ \ell+1$. Let $ P$ be a preference profile such that for each student $ i$, $ P_{i}$ is specified as follows (recall that there is at least $ k$ schools).

\begin{center}
    \begin{tabular}{c }
        $ P_{i}$ \\\hline
        $ s_{1}$ \\
        $ \vdots$\\
        $ s_{\ell}$\\
        $ s_{\ell+1}$\\
        $\emptyset$
    \end{tabular}
\end{center}
Then $ GS^{\ell}_{\ell+1}(P,\succ,q)=\emptyset$. Let $ P^{s_{\ell+1}}_{\ell+1}$ be a preference relation where student $ \ell+1$ ranked school $ s_{\ell+1}$ first. Clearly, $ (P^{s_{\ell+1}}_{\ell+1},P_{-(\ell+1)})$ is a Nash equilibrium of the game $ (P,GS^{\ell})$, and 
\begin{equation*}
    s_{\ell+1}=GS^{\ell}_{\ell+1}(P^{s_{\ell}}_{\ell+1},P_{-(\ell+1)},\succ,q) \mathrel{P_{\ell+1}} GS^{\ell}_{\ell+1}(P,\succ,q).
\end{equation*}

Therefore, the admission to school $ s_{\ell+1}$ is not strategy-proof to student  student $\ell+1 $ via $ GS^{\ell}$ in equilibrium.
\end{proof}

\end{document}